\newtheorem{theorem}{Theorem}
\newtheorem{lemma}[theorem]{Lemma}
\newtheorem{proposition}[theorem]{Proposition}
\newtheorem{definition}{Definition}
\newtheorem{corollary}{Corollary}
\newtheorem{proof}{Proof}
\numberwithin{equation}{section}
\numberwithin{theorem}{section}
\newcommand\bxi{\boldsymbol{\xi}}
\newcommand\bx{\boldsymbol{x}}
\newcommand\bu{\boldsymbol{u}}
\newcommand\bv{\boldsymbol{v}}
\newcommand\bA{\boldsymbol{A}}
\newcommand\bbA{\mathbb{A}}
\newcommand\bbR{\mathbb{R}}
\newcommand\bbV{\mathbb{V}}
\newcommand\bbW{\mathbb{W}}
\newcommand\bbF{\mathbb{F}}
\newcommand{\feq}{f_{\rm eq}}
\newcommand{\vp}{\varphi}
\newcommand\rd{\mathrm{d}}
\renewcommand{\lg}{\langle}
\newcommand{\rg}{\rangle}
\newcommand\Kn{\mathit{Kn}}
\newcommand\rspan{\operatorname{span}}
\newcommand\mL{\mathcal{L}}
\newcommand\mP{\mathcal{P}}
\newcommand\mQ{\mathcal{Q}}
\newcommand\mR{\mathcal{R}}
\newcommand{\wmQs}{\widetilde{\mQ}^{\ast}}
\newcommand\mbQ{\boldsymbol{\mathcal{Q}}}
\newcommand\mQs{\mathcal{Q}^{\ast}}
\newcommand\hmQ{\hat{\mQ}}
\newcommand\hmQs{\hat{\mQ}^{\ast}}
\newcommand\wB{\widetilde{B}}
\newcommand\mM{\mathcal{M}}
\newcommand\mD{\mathcal{D}}
\newcommand\hmD{\hat{\mathcal{D}}}
\newcommand\wmD{\widetilde{\mathcal{D}}}
\newcommand\mE{\mathcal{E}}
\newcommand\omE{\overline{\mE}}
\newcommand\mF{\mathcal{F}}
\newcommand\hmF{\hat{\mF}}
\newcommand\wmF{\widetilde{\mF}}
\newcommand\mK{\mathcal{K}}
\newcommand\hmK{\hat{\mK}}
\newcommand\mI{\mathcal{I}}
\newcommand\hmL{\hat{\mL}}
\newcommand{\wmL}{\widetilde{\mL}}
\newcommand\mLd{\mathcal{L}^{\dagger}}
\newcommand\hmLd{\hat{\mL}^{\dagger}}
\newcommand\wmLd{\widetilde{\mL}^{\dagger}}
\newcommand{\IM}{\mM^{-1}}
\newcommand{\hf}{\hat{f}}
\newcommand{\tf}{f}
\newcommand{\wf}{\widetilde{f}}
\newcommand{\osigma}{\overline{\sigma}}
\newcommand{\oq}{\overline{q}}
\newcommand\pd[2]{\dfrac{\partial {#1}}{\partial {#2}}}
\newcommand{\UP[1]}{^{(#1)}}
\newcommand{\UB[1]}{^{[#1]}}
\newcommand{\sUP[1]}{^{\ast(#1)}}
\newtheorem{remark}{Remark}
\title{A Framework of Model Reduction with Arbitrary Orders of Accuracy for the Boltzmann Equation}
\author{
Zhenning Cai\thanks{Department of Mathematics, National University of Singapore, Singapore 119076 email: {\tt
      matcz@nus.edu.sg}.},~~
Ruo Li\thanks{CAPT, LMAM \& School of Mathematical Sciences, Peking University, Beijing, China, 100871; Chongqing Research Institute of Big Data, Peking University, Chongqing, China, 401121, email: {\tt
      rli@math.pku.edu.cn}.},~~
      Yixiao Lu\thanks{HEDPS, CAPT \& School of Mathematical
    Sciences, Peking University, Beijing, China, 100871, email: {\tt
      luyixiao@pku.edu.cn}.},~~
      Yanli Wang\thanks{Beijing Computational
    Science Research Center, email: {\tt ylwang@csrc.ac.cn}.}}
\begin{document}

\maketitle

\begin{abstract}
   This paper presents a general framework for constructing reduced models that approximate the Boltzmann equation with arbitrary orders of accuracy in terms of the Knudsen number $\mathit{Kn}$, applicable to general collision models in rarefied gas dynamics. The framework is based on an orthogonal decomposition of the distribution function into components of different orders in $\mathit{Kn}$, from which the reduced models are systematically derived through asymptotic analysis. Compared to the Chapman-Enskog expansion, our approach yields more tractable model structures. Notably, we establish that a reduced model retaining all terms up to $O(\mathit{Kn}^n)$ in the expansion surprisingly yields models with order of accuracy $O(\mathit{Kn}^{n+1})$. Furthermore, when the collision term is linearized, the accuracy improves dramatically to $O(\mathit{Kn}^{2n})$. These results extend to regularized models containing second-order derivatives. As concrete applications, we explicitly derive 13-moment systems of Burnett and super-Burnett orders valid for arbitrary collision models.
   \vspace*{4mm}

    \noindent \textbf{Keywords:} kinetic theory, Knudsen number, reduced models, order of accuracy
\end{abstract}

\section{Introduction}

The kinetic theory plays a crucial role in statistical mechanics and has attracted significant research interest in recent decades. One important application lies in rarefied gas dynamics, which describes the evolution of dilute gases at mesoscopic scales. The Boltzmann equation, a fundamental model in gas kinetic theory, serves as a core mathematical framework. While progress has been made in numerical simulations of the Boltzmann equation through various approaches \cite{goldstein1989, ZhichengHu2019, Pareschi1996}, these methods are often constrained by the significant complexity of the quadratic collision term.
To address this, several simplified collision models have been proposed as approximations of the original collision term, including the BGK model \cite{BGK}, the ES-BGK model \cite{Holway}, and the Shakhov model \cite{Shakhov}.
compared to classical fluid models, numerical simulations remain computationally expensive due to the additional three-dimensional velocity variable.

Researchers have therefore explored reduced models resembling classical hydrodynamic equations to lower the dimensionality, particularly for gases that are not extremely rarefied. Gas rarefaction is typically characterized by the Knudsen number $\Kn$, defined as the ratio of the mean free path to the characteristic length. Classical macroscopic fluid models, such as the Euler and Navier-Stokes equations, accurately capture fluid behavior in the near-continuum regime ($\Kn < 0.01$) but lose accuracy as the gas becomes more rarefied. For transitional rarefied flows and free molecular flows ($\Kn > 1$), particle methods such as the Direct Simulation Monte Carlo (DSMC) method \cite{Bird} provide efficient simulation approaches. For intermediate regimes ($0.01 < \Kn < 1$), extensive research has focused on developing reduced models.

In \cite{Chapman1916, Enskog1921}, a seminal technique known as the Chapman-Enskog expansion was developed to derive reduced models in the form of conservation equations for mass, momentum, and energy. This method uses asymptotic analysis to generate models approximating the Boltzmann equation to any desired order in terms of the Knudsen number $\Kn$. For instance, the Euler and Navier-Stokes equations correspond to the zeroth-order and first-order models, respectively. Unfortunately, higher-order models, such as the second-order Burnett equations and third-order super-Burnett equations \cite{Burnett1936, Shavaliyiev1993, ReineckeKremer1990}, are not only mathematically complex but also ill-posed due to the loss of linear stability \cite{Bobylev1982}. Although several approaches have been proposed to enhance these models \cite{Bobylev2022, JinSlemrod2001, Bobylev2006}, their practical applications remain severely limited.

An alternative approach of model reduction is the moment method, introduced by Grad in his foundational work \cite{Grad}, which employs Hermite expansions to derive reduced systems.
Moment equations generally exhibit simpler forms and maintain linear stability. Several shortcomings of Grad’s original equations, including the loss of hyperbolicity \cite{Muller1993} and failure to predict smooth shock profiles \cite{Grad1952}, have prompted numerous improvements to broaden the method's applicability \cite{framework, Fox2022, Boehmer2020}. Most of these enhancements focus on reconstructing distribution functions based on given moments, without explicitly considering the accuracy order with respect to $\Kn$.

In \cite{Grad1958}, Grad raised a potential approach to combine the Chapman-Enskog expansion with the moment method, yielding a parabolic system with 13 moments where the moment closure is constructed through asymptotic analysis.
Such an idea was merely a side note in \cite{Grad1958} and was not further developed until 45 years later, when Struchtrup and Torrilhon formulated the regularized 13-moment (R13) equations for Maxwell molecules using the same method \cite{Struchtrup2003}.
Subsequent work \cite{Struchtrup} simplified and rederived these equations using an ``order of magnitude'' approach introduced in \cite{Struchtrup2005R13}.
This technique allows for generalizing R13 equations to arbitrary collision models, and formulates the R13 system for the linear hard-sphere model \cite{Struchtrup2013}.

Recently, a novel methodology has emerged for deriving R13 equations \cite{Cai2024linear, LinR13}, which employs basis functions obtained directly from asymptotic analysis rather than the classical Hermite expansion used in Grad's moment methods.
This offers a straightforward strategy to obtain reduced models with any desired order of accuracy in $\Kn$.
However, existing studies have focused exclusively on the linear regime, leaving the application to the full Boltzmann equation unexplored.

This work extends the framework to the nonlinear Boltzmann equation. We construct a sequence of nested finite-dimensional function spaces $\bbV\UP[0] \subset \bbV\UP[1] \subset \cdots$, ensuring that the projection of the distribution function onto $\bbV\UP[k]$ accurately captures all terms up to $O(\Kn^k)$.
The reduced models are then derived following the methodology of Grad's moment methods, which approximates the Boltzmann equation in these spaces by projections.
Our primary contribution lies in rigorously validating these models' orders of accuracy, which amazingly outperform their apparent precision.
As an application, we demonstrate how this framework can be used to derive 13-moment equations with super-Burnett order of accuracy.

The rest of this paper is organized as follows. Section \ref{sec:background} provides essential background information, including the Chapman-Enskog expansion and moment method for the Boltzmann equation. Our main results, including the derivation of reduced models and their orders of accuracy, are presented in Section \ref{sec:result}. Section \ref{sec:application} illustrates the application of our approach to 13-moment models. Theoretical proofs of the results in Section \ref{sec:result} are provided in Section \ref{sec:validation}. This paper ends with some concluding remarks in Section \ref{sec:conclusion}. 
\section{Chapman-Enskog expansion and moment methods for the Boltzmann equation}
\label{sec:background}
This section reviews classical approaches for deriving reduced models in gas kinetic theory and establishes the notation and working hypotheses used throughout this work.

\subsection{Boltzmann equation}
\label{sec:Boltz}
The gas flow is characterized by a distribution function $f(t,\bx,\bxi)$, dependent on time $t\in\bbR^+$, physical space $\bx\in\bbR^3$, and particle velocity $\bxi\in\bbR^3$. The evolution of this distribution function is governed by the Boltzmann equation:
\begin{equation}
    \label{eq:Boltz}
    \pd{f}{t}+\bxi\cdot \nabla_{\bx}f=\frac{1}{\Kn}\mQ[f,f],
\end{equation}
where $\mQ[f,f]$ denotes the bilinear collision operator and $\Kn$ represents the Knudsen number. The macroscopic equilibrium variables, including density $\rho$, velocity $\bu$ and temperature $\theta$, are defined as
\begin{equation}
    \label{eq:def_conserv}
    \rho(t, \bx)=\int_{\bbR^3}f \,\rd \bxi, \quad
    \bu(t, \bx)=\frac{1}{\rho}\int_{\bbR^3} \bxi f\,\rd \bxi, \quad
    \theta(t, \bx)=\frac{1}{3\rho} \int_{\bbR^3} |\bxi-\bu|^2 f\,\rd\bxi.
\end{equation}
These variables uniquely determine the local Maxwellian distribution $\feq$, satisfying $\mQ[\feq, \feq] = 0$, which is given by:
\begin{equation}
    \label{eq:Maxwellian}
    \feq=\rho\mM,\qquad
    \mM=\frac{1}{(2\pi \theta)^{\frac32}}\exp\left(-\frac{|\bxi-\bu|^2}{2\theta}\right).
\end{equation}

A standard technique in the asymptotic analysis of the Boltzmann equation involves decomposing the collision operator into linear and nonlinear components:
\begin{equation}
    \label{eq:decomp_Q}
    \mQ[f,f]=\mL[f]+\mQs[f,f],
\end{equation}
where
\begin{equation}
    \label{eq:def_L_Qs}
    \begin{split}
        &\mL[f]=\mQ[\feq, f-\feq]+\mQ[f-\feq, \feq]=\mQ[\feq, f]+\mQ[f, \feq]; \\ &\mQs[f,g]=\frac{1}{2} \left(\mQ[f-\feq,g-\feq] + \mQ[g-\feq,f-\feq] \right).
    \end{split}
\end{equation}
This decomposition satisfies $\mL[\feq] = 0$ and $\mQs[\feq,\cdot] = \mQs[\cdot,\feq] = 0$, and reformulates the Boltzmann equation as:
\begin{equation}
    \label{eq:re_Boltz}
    \pd{f}{t}+\bxi\cdot \nabla_{\bx}f=\frac{1}{\Kn}\mL[f]+\frac{1}{\Kn}\mQs[f,f].
\end{equation}
Notably, for a fixed equilibrium state $\feq$, the operator $\mL[f]$ in \eqref{eq:def_L_Qs} constitutes a linear operator acting on $f$.


To facilitate our analysis, we introduce several hypotheses regarding the operators $\mL$ and $\mQ$.
Specifically, we consider $\mL$ as a linear operator on $L^2(\mM^{-1}\rd\bxi)$, parameterized by $\bu$ and $\theta$, where $L^2(\IM\rd\bxi)$ denotes the $L^2$ space with an inner product weighted by a function $\IM$.
Our assumptions are as follows:
\begin{enumerate}[label=\textbf{(H\arabic*)}]

\item \label{Hypo:kerL} The null space of $\mL$ is five-dimensional, corresponding to the conserved quantities:
\begin{equation}
    \label{eq:Ker_L}
    \operatorname{Ker}\mL=\rspan\{\mM, \bxi \mM, |\bxi|^2\mM\}.
\end{equation}
\item \label{Hypo:sym} $\mL$ is symmetric, meaning $\lg \mL f, g\rg_{\IM}=\lg  f, \mL g\rg_{\IM}$, where
\begin{equation}
    \label{eq:def_inpro}
    \lg F, G\rg_{\omega}=\int_{\bbR^3} F G \omega\,\rd \bxi
\end{equation}
denotes the weighted inner product of $L^2(\omega \,\rd \bxi)$. 

\item \label{Hypo:semi-defi} $\mL$ is negative semi-definite on $L^2(\IM \,\rd\bxi)$:
\begin{equation}
    \label{eq:semi-defi}
    \lg \mL f, f \rg_{\IM} \leqslant 0,  
\end{equation}
where equality holds if and only if $f\in \operatorname{Ker} \mL$. 

\item \label{Hypo:L2_Q} $\mQ$ is a bilinear mapping on $L^2(\IM\,\rd\bxi) \times L^2(\IM\,\rd\bxi)\longrightarrow L^2(\IM\,\rd\bxi)$.

\item \label{Hypo:Fredholm} (Fredholm Alternative) $\mL$ is invertible on $(\operatorname{Ker}\mL)^{\perp}$. Consequently, the generalized inverse of $\mL$ is defined as:
\begin{equation}
    \label{eq:def_ldag}
    \mLd(f)=(\mL|_{(\operatorname{Ker} \mL)^{\perp}})^{-1}\left(\mP_{(\operatorname{Ker}\mL)^{\perp}}(f)\right),
\end{equation}
where $\mP_{\bbA}$ denotes orthogonal projection onto subspace $\bbA$. The symmetry of $\mL$ implies that $\mLd$ is likewise symmetric.

\item \label{Hypo:smooth} The operators $\mL$, $\mLd$, and $\mQ$ preserve smoothness with respect to the parameters $\bu$ and $\theta$. Specifically, for any families of functions $\phi_1^{[\bu,\theta]}$ and $\phi_2^{[\bu,\theta]}$ that depend smoothly on $\bu$ and $\theta$, the quantities $\mL[\phi_1^{[\bu,\theta]}]$, $\mLd[\phi_2^{[\bu,\theta]}]$, and $\mQ[\phi_1^{[\bu,\theta]}, \phi_2^{[\bu,\theta]}]$ remain smooth functions of $\bu$ and $\theta$.
\end{enumerate}
The first three assumptions generally hold for all linearized collision operators.
The remaining assumptions, although not universally valid for all collision kernels, are standard in the asymptotic analysis of the Boltzmann equation. 
We will clarify their applicability in the next subsection on the Chapman-Enskog expansion and refer readers to \cite{Baidoo2024} for further discussions of these assumptions.


In certain applications, computational efficiency is improved by neglecting the quadratic term $\mQs$ in the Boltzmann equation \cite{Cai2018,Yin2025}. This modification leads to asymptotic results that differ from the full quadratic case, which will also 
be explored in this work.

\subsection{Chapman-Enskog expansion and Maxwellian iteration}
\label{sec:CE}
The Chapman-Enskog expansion \cite{Chapman1916, Enskog1921} provides a systematic approach to derive moment equations with arbitrary desired order of accuracy in terms of $\Kn$. This method expands the distribution function $f$ as an asymptotic series: 
\begin{equation}
    \label{eq:CE_expan}
    f=\tf\UB[0]+\tf\UB[1]+\cdots,
\end{equation}
where each term $\tf\UB[k] = \Kn^k g\UB[k]$ has magnitude $O(\Kn^k)$ with $g\UB[k] \sim O(1)$. Through the Chapman-Enskog procedure, each $g\UB[k]$ can be expressed functionally in terms of $\rho$, $\bu$, $\theta$, and their derivatives.

The leading-order analysis reveals $\tf\UB[0] = \feq$.
For $k\geqslant 1$, substituting the expansion \eqref{eq:CE_expan} into the Boltzmann equation \eqref{eq:re_Boltz} and matching coefficients of $\Kn^k$ yields:
\begin{equation}
    \label{eq:match_iter}
    \pd{\tf\UB[k-1]}{t}+\bxi\cdot\nabla_{\bx}\tf\UB[k-1]=\frac{1}{\Kn}\mL[\tf\UB[k]]+\frac{1}{\Kn}\mQ\sUB[k],
\end{equation}
where $\mQ\sUB[k]$ collects all $O(\Kn^k)$ contributions from $\mQs$:
\begin{equation}
    \label{eq:exp_Qsupk}
    \mQ\sUB[k]=\sum_{\substack{r,s\geqslant 1 \\ r+s=k} }\mQ[\tf\UB[r], \tf\UB[s]],
\end{equation}
depending only on $\tf\UB[1],\cdots, \tf\UB[k-1]$ and is well-defined due to Hypothesis \ref{Hypo:L2_Q}.
The solution $\tf\UB[k]$ can be determined using the iterative relation:
\begin{equation}
    \label{eq:Maxwell_iter}
    \tf\UB[k]=\Kn\mLd\left(\pd{\tf\UB[k-1]}{t}+\bxi\cdot\nabla_{\bx}\tf\UB[k-1]-\frac{1}{\Kn}\mQ\sUB[k]\right) \in (\ker \mL)^{\perp}.
\end{equation}

The Chapman-Enskog expansion imposes orthogonality to $\ker \mL$ for all $\tf\UB[k]$ ($k \geq 1$), enabling unique determination via Hypothesis \ref{Hypo:Fredholm}. The smoothness of $\tf\UB[k]$ with respect to $\bu$ and $\theta$ is guaranteed by Hypothesis \ref{Hypo:smooth}, making it applicable in subsequent iterations.

This procedure expresses each $\tf\UB[k]$ as a nonlinear function of $\tf\UB[0]$ and its derivatives. In practice, time derivatives are eliminated using conservation laws at a desired order to avoid mixed derivatives.

To construct a system with order of accuracy $O(\Kn^n)$, we truncate the Chapman-Enskog expansion at the $n$-th term and approximate the distribution function by: 
\begin{equation}
    \label{eq:def_f_n}
    f_n=\tf\UB[0]+\tf\UB[1]+\cdots+\tf\UB[n]  
\end{equation} 
We emphasize that $f_n$ is entirely determined by the equilibrium variables $\rho$, $\bu$, and $\theta$. The governing equations for these quantities are derived by projecting the Boltzmann equation onto $\ker\mL$:
\begin{equation}
    \label{eq:norder_approx}
    \mP_{\ker\mL}\left(\pd{f_n}{t}+\bxi\cdot\nabla_{\bx}f_n\right)=0,
\end{equation}
where the right-hand side vanishes due to the conservative property of the collision term. Furthermore, \eqref{eq:norder_approx} can be simplified to:
\begin{equation}
\label{eq:CE-conservation}
    \pd{\feq}{t}+\mP_{\ker\mL} (\bxi\cdot\nabla_{\bx}f_n)=0,
\end{equation}
since $\mP_{\ker \mL} \partial_t f\UB[k] = 0$ for all $k \geq 1$, as guaranteed by the following proposition:

\begin{proposition}
    \label{prop:Ker}
    If $\phi\in \left(\ker\mL\right)^{\perp}$, then $\pd{\phi}{t}, \pd{\phi}{x_i} \in \left(\ker\mL\right)^{\perp},\, i=1,2,3$. 
\end{proposition}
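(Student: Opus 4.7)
The plan is to characterize the orthogonal complement $(\ker\mL)^{\perp}$ explicitly from the definition of the weighted inner product, observe that the resulting moment conditions are algebraically independent of the parameters $\bu$ and $\theta$, and then differentiate under the integral sign.

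First I would unwind the orthogonality. By Hypothesis \ref{Hypo:kerL}, $\phi \in (\ker\mL)^{\perp}$ is equivalent to $\lg \phi, \mM \rg_{\IM} = \lg \phi, \xi_i \mM\rg_{\IM} = \lg \phi, |\bxi|^2 \mM\rg_{\IM} = 0$ for $i=1,2,3$. The key observation is that in each of these pairings the weight $\mM^{-1}$ cancels with the factor $\mM$ in the basis element, so the conditions collapse to the pure velocity moments
\[
  \int_{\bbR^3} \phi\,\rd\bxi = 0, \qquad \int_{\bbR^3} \phi\,\xi_i\,\rd\bxi = 0, \qquad \int_{\bbR^3} \phi\,|\bxi|^2\,\rd\bxi = 0.
\]
Notably, neither $\bu$ nor $\theta$ appears in these integrals, so no hidden $(t,\bx)$-dependence is smuggled in through the local equilibrium.

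Next I would differentiate under the integral sign. Given the standing regularity that places $\phi$ together with its $t$- and $x_i$-derivatives in $L^2(\IM\,\rd\bxi)$, so that polynomial velocity moments are absolutely integrable uniformly near the point of interest, the derivative can be exchanged with the $\bxi$-integration, giving
\[
  \int_{\bbR^3} \pd{\phi}{t}\,\rd\bxi = \ppd{t}\int_{\bbR^3} \phi\,\rd\bxi = 0,
\]
and analogously for the moments against $\xi_i$ and $|\bxi|^2$ and for each spatial derivative. Reversing the first step, the three vanishing moment conditions for $\pd{\phi}{t}$ and $\pd{\phi}{x_i}$ are exactly the assertion that these derivatives lie in $(\ker\mL)^{\perp}$.

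The only subtlety worth flagging is conceptual rather than technical. Because $\mL$ depends on $\bu(t,\bx)$ and $\theta(t,\bx)$, the subspace $\ker\mL$ genuinely moves with $(t,\bx)$, so a priori one might worry that $\pd{\phi}{t}$ picks up a contribution from the rotation of the subspace itself. The cancellation of $\mM^{-1}$ against $\mM$ in the weighted inner product is precisely what defuses this concern: it produces constraints that are invariant under the motion of the frame. Once that point is noticed, the proof is essentially bookkeeping.
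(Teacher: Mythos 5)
Your proof is correct and takes essentially the same route as the paper's: you reduce membership in $(\ker\mL)^{\perp}$ to the vanishing of the raw velocity moments $\int_{\bbR^3}(1,\bxi,|\bxi|^2)^T\phi\,\rd\bxi=0$ (the weight $\mM^{-1}$ cancelling the factor $\mM$ in the kernel basis), and then differentiate under the integral sign. Your extra remarks on the $(\bu,\theta)$-dependence of $\ker\mL$ and the integrability needed to exchange derivative and integral merely make explicit what the paper's two-line proof leaves implicit.
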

\begin{proof}
    $\phi\in \left(\ker\mL\right)^{\perp}$ is equivalent to $\int_{\bbR^3} (1,\bxi,|\bxi|^2)^T \phi \,\rd \bxi = 0$. Then
    \begin{equation}
        \int_{\bbR^3} (1,\bxi,|\bxi|^2)^T \frac{\partial \phi}{\partial s} \,\rd \bxi = \frac{\partial}{\partial s} \int_{\bbR^3} (1,\bxi,|\bxi|^2)^T \phi \,\rd \bxi = 0, \qquad s=t,x_i.
    \end{equation}
\end{proof}

Specifically, for $n=0,1,2,3$, this procedure yields the well-known Euler, Navier-Stokes, Burnett, and super-Burnett equations, respectively.

An alternative approach to determine $f_n$ is an iterative method known as Maxwellian iteration, obtained by summing \eqref{eq:Maxwell_iter} for $k = 0,\ldots,n$:
\begin{equation}
    \label{eq:re_Maxwell_iter}
    f_n=\feq+\Kn\mLd\left(\pd{f_{n-1}}{t}+\bxi\cdot\nabla_{\bx}f_{n-1}-\frac{1}{\Kn}\mQs_n\right),
\end{equation}
where the cumulative nonlinear term is given by:
\begin{equation}
    \label{eq:def_mQs_l}
    \mQs_n=\sum_{k=1}^{n-1}\mQ\sUB[k]=\sum_{\substack{r,s\geqslant 1 \\r+s\leqslant n}}\mQ[\tf\UB[r], \tf\UB[s]]=\sum_{\substack{r,s\geqslant 1 \\r+s\leqslant n}}\mQ[f_r-f_{r-1}, f_s-f_{s-1}].
\end{equation}
Beginning with $f_0 = \feq$, this recursive procedure generates successive approximations $f_1, f_2, \cdots$.
Originally introduced by Ikenberry and Truesdell \cite{Ikenberry1956} for Maxwell molecules, this iterative technique was later extended to more general molecular interactions \cite{Reinecke1996}, enabling derivations of higher-order conservation laws beyond the Navier-Stokes order \cite{Reinecke1996}.

\subsection{Orders of magnitude}
In the Chapman-Enskog expansion, the claim $\tf\UB[k] \sim O(\Kn^k)$ holds only formally. The solution of the Boltzmann equation is influenced by the Knudsen number $\Kn$, and under certain conditions (e.g., within boundary layers), the scaling of $\tf\UB[k]$ may deviate from $O(\Kn^k)$ due to the direct impact of $\Kn$ on the solution. Despite this, the formal order of accuracy remains fundamental for the development of kinetic models.

To ensure a rigorous analysis, we establish the following rules for handling orders of magnitude:
\begin{itemize}
\item The leading-order term satisfies $\tf\UB[0] = \feq \sim O(1)$.
\item For any function $g \sim O(\Kn^k)$, it follows that $\Kn^{\ell} g \sim O(\Kn^{k+\ell})$ for any integer $\ell$.
\item For any linear operator $\mathcal{A}$ independent of $\Kn$, we have $\mathcal{A} g \sim O(\Kn^k)$ for any $g \sim O(\Kn^k)$.
\item For any two functions $g \sim O(\Kn^k)$ and $h \sim O(\Kn^{\ell})$, the sum satisfies $g + h \sim O(\Kn^{\min(k,\ell)})$.
\item For any bilinear operator $\mathcal{B}$ independent of $\Kn$, it holds that $\mathcal{B}(g,h) \sim O(\Kn^{k+\ell})$ for any two functions $g \sim O(\Kn^k)$ and $h \sim O(\Kn^{\ell})$.
\end{itemize}
In these statements, the operators $\mathcal{A}$ and $\mathcal{B}$ may include differential or integral operations. This means that the potential impact of large gradients in the solution is not explicitly considered when determining the orders of magnitude with respect to $\Kn$. Under this big-O notation, it is formally consistent to state that $f\UB[k] \sim O(\Kn^k)$ within the Chapman-Enskog expansion. All subsequent analyses will adhere to these scaling conventions.

\subsection{Moment models}
An alternative approach to derive reduced models is the moment method, which traces back to Grad’s pioneering work \cite{Grad}. This method involves taking the weighted inner product of the Boltzmann equation with a set of test functions $\phi$, resulting in the moment equations:
\begin{equation}
    \label{eq:basis_moment}
    \left \lg \pd{f}{t}, \phi\right\rg_{\omega}+\left\lg \bxi\cdot \nabla_{\bx}f, \phi\right\rg_{\omega}=\frac{1}{\Kn}\left\lg \mL[f], \phi\right\rg_{\omega}+\frac{1}{\Kn}\left \lg \mQs[f,f], \phi\right\rg_{\omega}.
\end{equation}
Here, $\omega$ is a weight function, and the product $\phi \omega$ is typically chosen as a polynomial of $\bxi$, ensuring that the conservation laws of mass, momentum, and energy are captured in the resulting system.
A moment system is then derived by selecting a finite basis and applying closure relations.

Recent advances \cite{Cai2024linear, LinR13} demonstrate that in the linear regime, appropriate selection of non-polynomial $\phi$ can yield $L^2$-stable moment equations with high asymptotic accuracy using minimal moments. Notably, the 13-moment system can achieve the same asymptotic accuracy as the super-Burnett model. This is made possible by choosing $\phi$ to align with the collision term, relaxing the traditional requirement that $\phi \omega$ must be a polynomial.

We now extend this framework to the nonlinear Boltzmann equation \eqref{eq:re_Boltz} with expanded moment systems.
As noted in Section \ref{sec:Boltz}, we also consider the case of linearized collision, where the quadratic part $\mQs$ vanishes in \eqref{eq:basis_moment}.

\section{Main results}
\label{sec:result}
In general, the model \eqref{eq:basis_moment} with a finite set of indices $\alpha$ is not inherently closed.
This means the resulting system cannot be directly formulated as a finite set of equations for functions of $\bx$ and $t$,
with their evolution fully described. Grad's closure method addresses this by approximating the distribution function through a finite expansion:
\begin{equation} 
    \label{eq:Grad_ansatz}
    f_{\mathrm{Grad}}(t,\bx,\bxi) = \sum_{\alpha \in \mathscr{I}} f_{\alpha}(t,\bx) p_{\alpha}(t,\bx,\bxi) \mM(t,\bx,\bxi),
\end{equation}
where $p_{\alpha}$ are Hermite polynomials in $\bxi$ (translated and scaled by local velocity and temperature), and $\mathscr{I}$ is an index set whose cardinality matches the number of equations. This ansatz converts \eqref{eq:basis_moment} into equations for the coefficients $f_{\alpha}$.

Compared to the Chapman-Enskog expansion, Grad's approach offers two advantages: (1) direct formulation without asymptotic series, and (2) lower-order derivatives in the resulting equations.
However, for most collision models (excepting Maxwell molecules and certain BGK-type models), the approximation error $f - f_{\mathrm{Grad}}$ remains $O(\Kn)$, regardless of the size of $\mathscr{I}$. 
Consequently, even the Navier-Stokes equations cannot be accurately derived from Grad's moment equations without parameter adjustments.

This section presents our main contributions. Specifically, we introduce a novel choice of basis functions
in the ansatz \eqref{eq:Grad_ansatz} that allows for approximating $f$ with any desired asymptotic order.
This enables the development of models with arbitrarily high order of accuracy while maintaining low-order derivatives.

\subsection{Decomposition of the function space}
\label{sec:order}
We employ a similar ansatz to Grad's method but project onto a carefully constructed function space. Our framework adopts the following mathematical conventions:

\begin{itemize}
\item All blackboard-bold spaces (e.g., $\bbV$) are Hilbert spaces with inner product $\langle \cdot, \cdot \rangle_{\mM^{-1}}$. This choice aligns with Grad's moment method and allows us to exploit the symmetry of the operator $\mL$ (see Hypothesis \ref{Hypo:sym}).
\item Since $\mM^{-1}$ depends on $\bu$ and $\theta$ (see \eqref{eq:Maxwellian}), all spaces inherit these parameters. Set relationships and operations (such as `$\subset$', `$\oplus$', `$\cap$') between spaces hold for all valid $\bu$ and $\theta$.
\item Each basis function $\vp$ represents a parameterized family with respect to $\bu$ and $\theta$. We denote parameter derivatives as $\partial_{u_i} \vp$ and $\partial_{\theta} \vp$.
\item Our analysis assumes that $f$ is the solution of the Boltzmann equation, with 
$\mM$ representing the corresponding local Maxwellian. For a given function $g(t,\bx,\bv)$, we say $g \in \mathbb{V}$ if $g(t,\bx,\cdot)$ is a member of $\bbV$ for all $(t,\bx)$, where $\bbV$ is parameterized by $\bu(t,\bx)$ and $\theta(t,\bx)$ from $\mM(t,\bx,\cdot)$.
\end{itemize}

We now construct a hierarchy of function spaces $\bbV^{(k)}$ designed to capture all $O(\Kn^k)$ terms in the distribution function:

\begin{definition}
Let
\begin{align}
\bbW\UP[0]&=\{\mM\},\quad \bbV\UP[0]=\operatorname{Ker}\mL=\operatorname{span}\{\mM, \bxi\mM, |\bxi|^2\mM \}, \label{eq:def_V0}\\
    \begin{split}
    \label{eq:def_F0}
    \bbF\UP[0]&=\Big\{\mM,\, \partial_{u_i}\mM,\, \partial_{\theta}\mM,\, \xi_j\mM,\, 
        \xi_j\partial_{u_i}\mM,\, \xi_j\partial_{\theta}\mM \Big| i,j=1,2,3\Big\} \\
        &=\Big\{\mM,\, \xi_i\mM,\, \xi_i\xi_j\,\mM,\, \xi_j|\bxi|^2\mM\Big| i,j=1,2,3 \Big\}.
    \end{split}
\end{align}
For $k \geqslant 1$, the function sets $\bbW\UP[k]$, $\bbF\UP[k]$, and $\bbV\UP[k]$ are defined recursively:
\begin{equation}
    \label{eq:def_VF}
    \begin{split}
    \bbW\UP[k] & =\mLd\left(\bbF\UP[k-1]\right)\cup \{\mM\}, \quad
    \bbV\UP[k] = \rspan(\bbW\UP[k])\oplus \bbV\UP[0], \\
    \bbF\UP[k] &= \bigg\{ \vp\UP[k],\, \partial_{u_i} \vp\UP[k],\, \partial_{\theta} \vp\UP[k],\,\xi_j\vp\UP[k],\, \xi_j \partial_{u_i} \vp\UP[k],\, \xi_j \partial_{\theta} \vp\UP[k],\, \mQ^*[\vp\UP[r], \vp\UP[s]] \,\Big| \\
    &\qquad\qquad\qquad \vp\UP[l] \in \bbW\UP[l], \, r+s \leqslant k+1, \, i,j=1,2,3 \bigg\}.
\end{split}
\end{equation}
\end{definition}

In this construction, $\bbV\UP[k]$ is a linear function space, while $\bbW\UP[k]$ and $\bbF\UP[k]$ are finite sets of functions (see Theorem \ref{thm:order} below).  
Besides, the hierarchy is nested: 
$$\bbW\UP[k]\subset \bbW\UP[l],\,\bbV\UP[k]\subset \bbV\UP[l],\,\bbF\UP[k]\subset \bbF\UP[l], \quad k\leqslant l.$$



The following theorem establishes the crucial relationship between the function spaces $\bbV\UP[k]$ and the asymptotic behavior of the distribution function:
\begin{theorem}
\label{thm:order}
For any $k\geqslant 0$, it holds:
\begin{enumerate}
    \item \label{state:finite} Both $\bbW\UP[k]$ and $\bbF\UP[k]$ have finite elements, and thus $\bbV\UP[k]$ has finite dimensions.
    \item For any $\psi\UP[k+1]$ satisfying $\psi\UP[k+1]\perp \bbV\UP[k]$ under the inner product $\lg\cdot, \cdot\rg$, it holds $\lg \psi\UP[k+1], f\rg \sim O(\Kn^{k+1})$. \label{state:order}
\end{enumerate}
    
\end{theorem}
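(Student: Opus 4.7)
The plan is to prove both statements by simultaneous induction on $k$, driven by the Maxwellian iteration formula \eqref{eq:Maxwell_iter}. The central inductive claim is that every Chapman--Enskog correction satisfies $\tf\UB[l]\in\bbV\UP[l]$; once this is in hand, statement~\ref{state:order} follows immediately, since $f_k=\feq+\tf\UB[1]+\cdots+\tf\UB[k]\in\bbV\UP[k]$ forces $\lg\psi\UP[k+1],f_k\rg=0$ whenever $\psi\UP[k+1]\perp\bbV\UP[k]$, and therefore
\[
\lg\psi\UP[k+1],f\rg=\lg\psi\UP[k+1],f-f_k\rg\sim O(\Kn^{k+1})
\]
by the Chapman--Enskog bound $f-f_k\sim O(\Kn^{k+1})$.

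The base case $k=0$ is immediate: $\bbW\UP[0]$ and $\bbF\UP[0]$ are finite by explicit enumeration, and $\feq=\rho\mM\in\rspan\{\mM\}\subset\bbV\UP[0]$. Statement~\ref{state:finite} at level $k$ then follows combinatorially from \eqref{eq:def_VF}: granted that $\bbW\UP[l]$ and $\bbF\UP[l]$ are finite for $l<k$, the set $\bbW\UP[k]=\mLd(\bbF\UP[k-1])\cup\{\mM\}$ is obtained by a single operator applied to finitely many functions, and $\bbF\UP[k]$ is assembled from finitely many $\vp\UP[l]\in\bbW\UP[l]$ ($l\leq k$) via a fixed finite list of operations---parameter derivatives, multiplications by $\xi_j$, and evaluations of $\mQ^*$ over pairs $(r,s)$ with $r+s\leq k+1$.

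The heart of the argument is the inductive step $\tf\UB[k+1]\in\bbV\UP[k+1]$. Using \eqref{eq:Maxwell_iter},
\[
\tf\UB[k+1]=\Kn\,\mLd\!\left(\pd{\tf\UB[k]}{t}+\bxi\cdot\nabla_{\bx}\tf\UB[k]-\frac{1}{\Kn}\mQ\sUB[k+1]\right),
\]
so it suffices to show that the argument of $\mLd$ lies in $\rspan(\bbF\UP[k])$ modulo $\ker\mL$. I would write the induction hypothesis as $\tf\UB[k]=\sum_\alpha g_\alpha(t,\bx)\vp_\alpha$ with each $\vp_\alpha\in\bbW\UP[k]\cup\bbV\UP[0]$ depending parametrically on $\bu(t,\bx),\theta(t,\bx)$. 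The chain rule applied to $\partial_t$ and $\bxi\cdot\nabla_{\bx}$ then produces, with $(t,\bx)$-dependent coefficients, a linear combination of
\[
\vp_\alpha,\quad \partial_{u_i}\vp_\alpha,\quad \partial_\theta\vp_\alpha,\quad \xi_j\vp_\alpha,\quad \xi_j\partial_{u_i}\vp_\alpha,\quad \xi_j\partial_\theta\vp_\alpha,
\]
each of which is in $\bbF\UP[k]$ by construction. For the quadratic contribution $\mQ\sUB[k+1]=\sum_{r+s=k+1,\,r,s\geq 1}\mQ[\tf\UB[r],\tf\UB[s]]$, the induction hypothesis expresses each factor through basis elements of $\bbW\UP[r]$ and $\bbW\UP[s]$; bilinearity of $\mQ$ then reduces each summand to a linear combination of the $\mQ^*$-type elements of $\bbF\UP[k]$, which are included since $r+s=k+1$. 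Applying $\mLd$ to the whole expression places $\tf\UB[k+1]$ in $\rspan(\mLd(\bbF\UP[k]))\subset\rspan(\bbW\UP[k+1])\subset\bbV\UP[k+1]$, closing the induction.

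The main obstacle I anticipate is the quadratic step: the $\mQ^*$ elements of $\bbF\UP[k]$ are defined through deviations from $\feq$ in \eqref{eq:def_L_Qs}, whereas $\mQ\sUB[k+1]$ pairs perturbations through $\mQ$ directly. Reconciling the two requires exploiting $\mQ[\feq,\feq]=0$ together with the fact that $\mM\in\bbW\UP[l]$ for every $l$, so that the excess $\feq$-contributions either cancel or remain within $\rspan(\bbF\UP[k])$. A secondary subtlety is that Maxwellian iteration leaves time derivatives implicit, and their later elimination through the macroscopic conservation laws reshapes the scalar coefficients $g_\alpha(t,\bx)$ without enlarging the velocity-space basis; this reshaping must nevertheless be checked to confirm it introduces no functions outside $\bbV\UP[k+1]$.
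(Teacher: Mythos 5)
Your treatment of statement (1) is fine and matches the paper's (a straightforward induction on the recursive definition \eqref{eq:def_VF}), and the structural half of your argument for statement (2) --- that each Chapman--Enskog term satisfies $f^{[l]}\in\rspan(\bbW^{(l)})$, hence $f_k\in\bbV^{(k)}$, proved by expanding in the $\bbW^{(l)}$ basis and using the chain rule and bilinearity so that everything lands in $\rspan(\bbF^{(k)})$ before applying $\mLd$ --- is sound and is exactly the computation the paper performs (and states in the remark following the theorem). One small repair: the induction hypothesis should be $f^{[l]}\in\rspan(\bbW^{(l)})$, not an expansion over $\bbW^{(k)}\cup\bbV^{(0)}$; the extra $\bbV^{(0)}$ directions $\xi_i\mM$, $|\bxi|^2\mM$ are not needed (for $l\geqslant 1$ the terms lie in the range of $\mLd$, hence are orthogonal to $\bbV^{(0)}$), and allowing them would produce derivative/multiplication terms not covered by the definition of $\bbF^{(k)}$.

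The genuine gap is the step $\lg\psi^{(k+1)},f\rg=\lg\psi^{(k+1)},f-f_k\rg\sim O(\Kn^{k+1})$, which rests on the remainder bound $f-f_k\sim O(\Kn^{k+1})$ for the Chapman--Enskog truncation. That bound is nowhere available in the paper's framework: Section 2 grants only the formal orders of the individual terms $f^{[k]}\sim O(\Kn^k)$ and, at lowest order, that the true solution is $O(\Kn)$-close to the local Maxwellian; it does not grant that the truncated CE series approximates the actual solution of the Boltzmann equation to order $\Kn^{k+1}$ for every $k$. Assuming that estimate is assuming essentially the analytic content the theorem is meant to deliver, and it is precisely what the paper's proof is built to avoid. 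The paper instead argues by induction on $k$ about the true solution itself: it decomposes $f$ via its orthogonal projections $f^{(l)}=\mP_{\bbV^{(l)}}f-\mP_{\bbV^{(l-1)}}f$ (with remainder $f^R\perp\bbV^{(k-1)}$ of order $O(\Kn^k)$ by the induction hypothesis), writes $\lg f,\psi^{(k+1)}\rg=\lg\mLd(\mL f),\psi^{(k+1)}\rg$ using $\psi^{(k+1)}=\mL\mLd\psi^{(k+1)}$ and the symmetry of $\mL$, $\mLd$, substitutes the Boltzmann equation $\mL f=\Kn\,\mD f-\mQs[f,f]$, and then splits the result into a principal part lying in $\rspan(\bbF^{(k-1)})$ --- whose $\mLd$-image lies in $\bbV^{(k)}$ and is therefore annihilated by $\psi^{(k+1)}$ --- plus a remainder of order $O(\Kn^{k+1})$. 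This bootstrapping mechanism, which upgrades only the first-order closeness $f-\feq\sim O(\Kn)$ into the full hierarchy of estimates for the true $f$, is the heart of the proof and is absent from your proposal; without it (or an independent justification of the CE remainder bound at every order), the argument for statement (2) does not go through.
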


The detailed proof of this theorem is provided in Section \ref{sec:proof_thm_order}. It demonstrates that the projection $\mP_{\bbV\UP[k]}{f}$ retains all $O(\Kn^k)$ information of the exact solution $f$. This key insight motivates our model reduction strategy: by projecting onto $\bbV\UP[k]$, we obtain a finite-dimensional system while preserving the desired asymptotic accuracy. The remainder of this section develops this approach and rigorously analyzes their orders of accuracy in terms of $\Kn$.

\begin{remark}
The function space construction directly corresponds to the Maxwellian iteration process. From \eqref{eq:re_Maxwell_iter}, we have:
\begin{displaymath}
\pd{f_{n-1}}{t}+\bxi\cdot\nabla_{\bx}f_{n-1}-\frac{1}{\Kn}\mQs_n \in \rspan(\bbF\UP[n-1]), 
\qquad f_n \in \rspan(\bbW\UP[n]).
\end{displaymath}
While $\bbV\UP[n]$ is designed such that $f_n \in \bbV\UP[n]$, it is intentionally defined to be slightly larger than $\rspan(\bbW\UP[n])$. This can be observed from \eqref{eq:def_V0}, where $\bbV\UP[0]$ has four more basis functions compared to $\rspan(\bbW\UP[0])$.
These additional dimensions allow $\bbV\UP[k]$ to serve as a basis for reduced models, which naturally accommodates conservation laws for mass, momentum, and energy. Further details will be presented in the following subsection.
\end{remark}

\subsection{Derivation of reduced models}
\label{sec:deri_moment}
For notational simplicity, we write $\langle \cdot, \cdot \rangle_{\IM}$ as $\langle \cdot, \cdot \rangle$ in the following.  
Theorem \ref{thm:order} guarantees that the projection $f^{(k)} := \mP_{(\bbV\UP[k-1])^{\perp} \cap \bbV\UP[k]} f$ satisfies $f^{(k)} \sim O(\Kn^k)$. This motivates the decomposition:
\begin{equation}
    \label{eq:f2fk}
    f=\sum_{k=0}^{\infty}f\UP[k].
\end{equation}
Let $\{\vp_{\alpha}\UP[k]\}$ denote an orthogonal basis for $(\bbV\UP[k-1])^{\perp} \cap \bbV\UP[k]$ (with $\bbV\UP[-1] := \{0\}$). Then, $f\UP[k]$ can be further expanded as
\begin{equation}
    \label{eq:exp_further_fk}
    f\UP[k]=\sum_{\alpha=1}^{N_k} f_{\alpha}\UP[k] \vp_{\alpha}\UP[k], \qquad
    N_k = \dim\left((\bbV\UP[k-1])^{\perp} \cap \bbV\UP[k]\right),
\end{equation}
where $f_{\alpha}\UP[k] = C_{\alpha}\UP[k] \langle f, \vp_{\alpha}\UP[k] \rangle$ with $C_{\alpha}\UP[k] = 1/\langle \vp_{\alpha}\UP[k], \vp_{\alpha}\UP[k] \rangle$.

Notably, the zeroth-order term $f\UP[0]$ coincides with the local Maxwellian $\feq$ or $f\UB[0]$ from the Chapman-Enskog expansion. When $k > 0$, while both $f\UP[k]$ and $\tf\UB[k]$ are $O(\Kn^k)$, they differ by higher-order terms due to their distinct constructions.

The derivation begins by projecting the Boltzmann equation onto the equilibrium subspace $\bbV\UP[0]$ (defined in \eqref{eq:def_V0}) to obtain the fundamental conservation laws, which is independent of the specific form of the collision operator. We select the following orthogonal basis for this five-dimensional space:
\begin{equation} \label{eq:V0basis}
\varphi_1\UP[0] = \mM, \,\,
\varphi_{i+1}\UP[0] = (\xi_i - u_i) \mM, \,\, i=1,2,3; \,\,
\varphi_5\UP[0] = \frac{|\bxi-\bu|^2-3\theta}{2} \mM.
\end{equation}
Projecting the Boltzmann equation onto these basis functions leads to the conservation laws:
\begin{equation}
    \label{eq:conserv_law}
    \begin{split}
        &\pd{\rho}{t}+\bu\cdot\nabla_{\bx}\rho+\rho\nabla_{\bx}\cdot{\bu}=0, \\
        &\rho\pd{u_i}{t}+\rho\bu\cdot\nabla_{\bx}u_i+\theta\pd{\rho}{x_i}+\rho\pd{\theta}{x_i}+ 
        \sum_{l=1}^{\infty}\left\lg\bxi\cdot\nabla_{\bx}f\UP[l], (\xi_i-u_i)\mM\right\rg=0, \\ 
        &\frac32\rho\pd{\theta}{t}+\frac32\rho\bu\cdot\nabla_{\bx}{\theta}+\rho\theta\nabla_{\bx}\cdot\bu+ 
        \sum_{l=1}^{\infty}\left\lg\bxi\cdot\nabla_{\bx}f\UP[l], \left(\frac{|\bxi-\bu|^2}{2}-\frac{3\theta}2\right)\mM\right\rg=0.
    \end{split}
\end{equation}
These equations are derived using the expansion \eqref{eq:f2fk} with $f\UP[0] = \rho \mM$, and leveraging the fact that $\mL[f]$, $\mQs[f,f]$, $\partial_t f\UP[k]$ and $\partial_{x_i}f\UP[k]$ are all orthogonal to $\bbV\UP[0]$ when $k \geqslant 1$ (see Proposition \ref{prop:Ker}). 
The term $f\UP[k]$ can be expressed via the coefficients $f_{\alpha}\UP[k]$ using \eqref{eq:exp_further_fk}.


To derive the evolution equations for the general coefficients $f_{\alpha}\UP[k]$, we substitute the test function $\phi = \vp_{\alpha}\UP[k]$ into the moment equation \eqref{eq:basis_moment} for $k\geqslant 1$ and $1\leqslant\alpha\leqslant N_k$. Using the expansions \eqref{eq:f2fk} and \eqref{eq:exp_further_fk}, we obtain the coupled system:
\begin{equation}
    \label{eq:simp_moment}
    \pd{f_{\alpha}\UP[k]}{t}+\sum_{l=0}^{+\infty} \sum_{\beta=1}^{N_l} \left( \bA_{\alpha,\beta}\UP[kl]\cdot\nabla_{\bx}f_{\beta}\UP[l]+B_{\alpha,\beta}\UP[kl]f_{\beta}\UP[l] \right)=\frac{1}{\Kn}\sum_{l=1}^{+\infty} \sum_{\beta=1}^{N_l} L_{\alpha,\beta}\UP[kl]f_{\beta}\UP[l]+\frac{1}{\Kn} \mQ_{\alpha}\sUP[k],
\end{equation}
where 
\begin{equation}
    \label{eq:Q_alpha}
    \mQ_{\alpha}\sUP[k]=\sum_{l,m=1}^{+\infty} \sum_{\beta=1}^{N_l} \sum_{\gamma=1}^{N_m} Q_{\alpha,\beta,\gamma}\UP[klm] f_{\beta}\UP[l] f_{\gamma}\UP[m].
\end{equation}
and
\begin{equation}
    \label{eq:notation_ABL}
    \begin{gathered}
        \bA_{\alpha,\beta}\UP[kl]=C_{\alpha}\UP[k] \left\lg\vp_{\alpha}\UP[k] , \bxi\vp_{\beta}\UP[l]\right\rg, \qquad
        B_{\alpha,\beta}\UP[kl]=C_{\alpha}\UP[k] \left\lg\vp_{\alpha}\UP[k], \pd{\vp_{\beta}\UP[l]}{t}+\bxi\cdot\nabla_{\bx}\vp_{\beta}\UP[l]\right\rg, \\
        L_{\alpha,\beta}\UP[kl]=C_{\alpha}\UP[k]\left\lg\vp_{\alpha}\UP[k], \mL[\vp_{\beta}\UP[l]]\right\rg, \qquad
        Q_{\alpha,\beta,\gamma}\UP[klm] = C_{\alpha}\UP[k]\left\lg\vp_{\alpha}\UP[k], \mQ[\vp_{\beta}\UP[l],\vp_{\gamma}\UP[m]]\right\rg.
    \end{gathered}
\end{equation}
The derivation incorporates the fact $\mL(f\UP[0]) = 0$ and 
\begin{equation}
    \label{eq:mQs_sum}
    \mQs[f,f]=\mQ[f-f\UP[0], f-f\UP[0]]=\sum_{l,m=1}^{\infty}\mQ[f\UP[l],f\UP[m]],
\end{equation}
so the indices $l$ and $m$ in the summations of $\mL$ and $\mQ$ begin with $1$ rather than $0$. 

It is worth noting that setting $k = 0$ in \eqref{eq:simp_moment} recovers the conservation laws in \eqref{eq:conserv_law}, but these equations have a distinctive nature. Specifically, in this case, $f_2\UP[0] = f_3\UP[0] = f_4\UP[0] = f_5\UP[0] \equiv 0$ due to the choice of basis functions in \eqref{eq:V0basis}. Consequently, when $k = 0$, the equations \eqref{eq:simp_moment} for $\alpha = 2,3,4,5$ do not govern the dynamics of $f_{\alpha}\UP[k]$. Instead, they dictate the evolution of the velocity $\bu$ and the temperature $\theta$, whose time derivatives appear in the coefficients $B_{\alpha,\beta}\UP[kl]$.

Considering all $k \in \mathbb{N}$ and $\alpha = 1,\cdots,N_k$, the equations in \eqref{eq:simp_moment} constitute an infinite system. Our reduced models are derived by selectively removing certain terms from \eqref{eq:conserv_law} and \eqref{eq:simp_moment}. In this study, we will explore two types of reduced models.

\subsubsection{Hyperbolic reduced models}
The first type of reduced model can be viewed as a generalization of the Euler equations and Grad's moment equations, characterized by a first-order quasi-linear system.
Given a positive integer $n$, this system is obtained by retaining only the equations derived from the inner product with $\varphi_{\alpha}\UP[k]$ for $k \leqslant n$, along with the following closure conditions:
\begin{enumerate}
\item In \eqref{eq:simp_moment}, set $f_{\alpha}\UP[k] = 0$ for all $k > n$.
\item In \eqref{eq:Q_alpha}, discard all terms with $l + m > n + 1$.
\end{enumerate}
This leads to the following reduced model formulation:
\begin{equation} \label{eq:closure}
  \begin{split}
    &\pd{f_{\alpha}\UP[k]}{t}+\sum_{l=0}^n \sum_{\beta=1}^{N_l} \left( \bA_{\alpha,\beta}\UP[kl]\cdot\nabla_{\bx}f_{\beta}\UP[l]+B_{\alpha,\beta}\UP[kl]f_{\beta}\UP[l] \right) \\
    &\quad =\frac{1}{\Kn}\sum_{l=1}^n \sum_{\beta=1}^{N_l} L_{\alpha,\beta}\UP[kl]f_{\beta}\UP[l]+\frac{1}{\Kn} \hmQ_{\alpha}\sUP[k], \qquad 
    0\leqslant k\leqslant n, \quad 1\leqslant \alpha\leqslant N_k,
  \end{split}
\end{equation}
where
\begin{equation}
    \label{eq:def_hmQ}
    \hmQ_{\alpha}\sUP[k]=\left\lg \vp_{\alpha}\UP[k], \sum_{l=1}^{n} \sum_{m=1}^{n+1-l}\mQ[f\UP[l],f\UP[m]]\right\rg=\sum_{l=1}^{n-1} \sum_{m=1}^{n+1-l} \sum_{\beta=1}^{N_l} \sum_{\gamma=1}^{N_m} Q_{\alpha,\beta,\gamma}\UP[klm] f_{\beta}\UP[l] f_{\gamma}\UP[m].
\end{equation}
In the case of the linearized collision operator, where $\mQs$ is absent, the closure retains the form of \eqref{eq:closure}, but all coefficients $Q_{\alpha,\beta,\gamma}\UP[klm]$ are set to zero.

The time derivatives appearing in $B_{\alpha,\beta}\UP[kl]$ require careful handling through the conservation laws. We define the spatial flux operators:
\begin{equation}
\label{eq:def_mE}
\begin{split}
&\mathcal{E}_{u_i}\UP[0] = -\bu \cdot \nabla_{\bx} u_i - \frac{\theta}{\rho} \pd{\rho}{x_i} - \pd{\theta}{x_i}, \qquad \mathcal{E}_{\theta}\UP[0] = -\rho \bu \cdot \nabla_{\bx} \theta - \frac{2}{3} \theta \nabla \cdot \bu, \\
&\mathcal{E}_{u_i}\UP[j] = -\frac{1}{\rho}\left\lg \bxi \cdot \nabla_{\bx} f\UP[j], (\xi_i - u_i) \mM \right\rg, \\
&\mathcal{E}_{\theta}\UP[j] = -\frac{2}{3\rho}\sum_{\alpha=1}^{N_j} \left\lg \bxi \cdot \nabla_{\bx} f\UP[j], \left( \frac{|\bxi-\bu|^2}{3} - \theta \right) \mM \right\rg, \qquad j = 1,2,\ldots,
\end{split}
\end{equation}
where $f\UP[j]$ can be further expanded using \eqref{eq:exp_further_fk}. The time derivatives can thus be expressed as:
\begin{equation}
    \label{eq:re_conserv}
    \pd{u_i}{t}=\sum_{j=0}^n \mE_{u_i}\UP[j],\qquad \pd{\theta}{t}=\sum_{j=0}^n \mE_{\theta}\UP[j].
\end{equation}
Consequently, the time derivatives in $B_{\alpha,\beta}\UP[kl]$ (see \eqref{eq:notation_ABL}) can be computed using
\begin{equation} \label{eq:time_derivative}
\pd{\vp_{\beta}\UP[l]}{t}=\sum_{i=1}^3\pd{\vp_{\beta}\UP[l]}{u_i}\pd{u_i}{t}+\pd{\vp_{\beta}\UP[l]}{\theta}\pd{\theta}{t}, 
\end{equation}
leading to the form
\begin{equation} \label{eq:B}
\begin{split}
B_{\alpha,\beta}\UP[kl] &= C_{\alpha}\UP[k] \left[ \sum_{i=1}^3 \left\lg \vp_{\alpha}\UP[k], \pd{\vp_{\beta}\UP[l]}{u_i} \right\rg \pd{u_i}{t} + \left\lg \vp_{\alpha}\UP[k], \pd{\vp_{\beta}\UP[l]}{\theta} \right\rg \pd{\theta}{t} + \left\lg \vp_{\alpha}\UP[k], \bxi \cdot \nabla_{\bx} \vp_{\beta}\UP[l]\right\rg \right] \\
&= C_{\alpha}\UP[k] \left[ \sum_{j=0}^n \left( \sum_{i=1}^3 \left\lg \vp_{\alpha}\UP[k], \pd{\vp_{\beta}\UP[l]}{u_i} \right\rg \mathcal{E}_{u_i}\UP[j] + \left\lg \vp_{\alpha}\UP[k], \pd{\vp_{\beta}\UP[l]}{\theta} \right\rg \mathcal{E}_{\theta}\UP[j] \right) + \left\lg \vp_{\alpha}\UP[k], \bxi \cdot \nabla_{\bx} \vp_{\beta}\UP[l]\right\rg \right],
\end{split}
\end{equation}
where the spatial derivative $\nabla_{\bx} \vp_{\beta}\UP[l]$ can be simplified as
$$
\nabla_{\bx} \vp_{\beta}\UP[l] = \sum_{i=1}^3\pd{\vp_{\beta}\UP[l]}{u_i} \nabla_{\bx} u_i+\pd{\vp_{\beta}\UP[l]}{\theta}\nabla_{\bx} \theta.
$$
For $k = 1, \cdots, n$, plugging \eqref{eq:B} into \eqref{eq:closure} yields the evolution equations for $f_{\alpha}\UP[k]$, which involve only first-order spatial derivatives.

These resulting equations exhibit a hyperbolic form, but their hyperbolicity is not universally guaranteed. For Maxwell molecules, the system reduces to Grad's moment equations, which are only hyperbolic near equilibrium \cite{Cai2013}. This limitation motivates the introduction of second-order derivatives into our models as regularizations.

\subsubsection{Regularized reduced models}
\label{sec:regularized}
The second type of models generalize the Navier-Stokes-Fourier equations and regularized moment equations, which have second-order derivatives in some of the equations. Such models are more likely to produce smooth solutions, thereby avoiding non-physical subshocks in the computation of shock structures \cite{Torrilhon2004}. For these models, the equations for $f_{\alpha}\UP[k]$ with $k = 0,1,\cdots,n-1$ are given by
\begin{equation} \label{eq:closure2}
  \begin{split}
    & \pd{f_{\alpha}\UP[k]}{t}+\sum_{l=0}^n \sum_{\beta=1}^{N_l} \bA_{\alpha,\beta}\UP[kl]\cdot\nabla_{\bx}f_{\beta}\UP[l]+\sum_{l=0}^{n} \sum_{\beta=1}^{N_l} \wB_{\alpha,\beta}\UP[kl]f_{\beta}\UP[l] \\
    & \quad =\frac{1}{\Kn}\sum_{l=1}^n \sum_{\beta=1}^{N_l} L_{\alpha,\beta}\UP[kl]f_{\beta}\UP[l]+\frac{1}{\Kn} \widetilde{\mQ}_{\alpha}\sUP[k], \quad 0\leqslant k\leqslant n-1, \quad 1\leqslant \alpha\leqslant N_k,
  \end{split}
\end{equation}
where 
\begin{equation}
    \label{eq:def_wB}
    \begin{aligned}
    \wB_{\alpha,\beta}\UP[kl] &= C_{\alpha}\UP[k] \Bigg[ \sum_{j=0}^{n-1} \left( \sum_{i=1}^3 \left\lg \vp_{\alpha}\UP[k], \pd{\vp_{\beta}\UP[l]}{u_i} \right\rg \mathcal{E}_{u_i}\UP[j] + \left\lg \vp_{\alpha}\UP[k], \pd{\vp_{\beta}\UP[l]}{\theta} \right\rg \mathcal{E}_{\theta}\UP[j] \right) \\
    \qquad\qquad\qquad&+ \left\lg \vp_{\alpha}\UP[k], \bxi \cdot \nabla_{\bx} \vp_{\beta}\UP[l]\right\rg \Bigg], \quad k = n \text{\quad or\quad} k > 0, l = n; \\
    \wB_{\alpha,\beta}\UP[kl] &= B_{\alpha,\beta}\UP[kl], \quad \text{otherwise,} 
    \end{aligned}
\end{equation}
and
\begin{equation}
    \label{eq:def_wmQ}
    \begin{aligned}
    \widetilde{\mQ}_{\alpha}\sUP[0] &= 0, \qquad \widetilde{\mQ}_{\alpha}\sUP[1]=
    \begin{cases}
    \hmQ_{\alpha}\sUP[1], & n \geqslant 2, \\
    0, & n = 1;
    \end{cases}
    \\
    \widetilde{\mQ}_{\alpha}\sUP[k]&=\sum_{l=1}^{n-1} \sum_{m=1}^{n-l} \sum_{\beta=1}^{N_l} \sum_{\gamma=1}^{N_m} Q_{\alpha,\beta,\gamma}\UP[klm] f_{\beta}\UP[l] f_{\gamma}\UP[m], \qquad 2 \leqslant k \leqslant n.
    \end{aligned}
\end{equation}
To close the system, an expression for $f_{\beta}\UP[n]$ is required, which is determined by
\begin{equation} \label{eq:fn}
    \begin{aligned}
    \sum_{\beta=1}^{N_n} L_{\alpha,\beta}\UP[nn] f_{\beta}\UP[n] = \sum_{l=0}^{n-1} \sum_{\beta=1}^{N_l} \Kn \left( \bA_{\alpha,\beta}\UP[nl]\cdot\nabla_{\bx}f_{\beta}\UP[l]+\widetilde{B}_{\alpha,\beta}\UP[nl]f_{\beta}\UP[l] - \frac{1}{\Kn} L_{\alpha,\beta}\UP[nl] f_{\beta}\UP[l] \right) - \widetilde{\mQ}_{\alpha}\sUP[n], \\
    \alpha = 1,\cdots,N_n.
    \end{aligned}
\end{equation}
This equation represents a linear system that must be solved to express $f_{\beta}\UP[n]$ in terms of $f_{\beta}\UP[k]$ for $k = 0,1,\cdots,n-1$. Notably, $f_{\beta}\UP[n]$ involves at most first-order spatial derivatives. 
Consequently, when these expressions are substituted back into \eqref{eq:closure2}, only second-order derivatives are introduced through terms involving $\nabla_{\bx} f_{\beta}\UP[n]$, $\mathcal{E}_{u_i}\UP[n]$, and products like $\widetilde{B}_{\alpha,\beta}\UP[kn]f_{\beta}\UP[n]$. 

Comparing the hyperbolic model \eqref{eq:closure} with the regularized model defined by \eqref{eq:closure2} and \eqref{eq:fn}, it is evident that the latter omits several terms present in the former. Most significantly, the time derivative of $f_{\alpha}\UP[n]$ is eliminated in \eqref{eq:fn}, fundamentally altering the nature of the equations. These models include the regularized 13-moment equations for Maxwell molecules \cite{Struchtrup2003}, which is why we refer to them as \emph{regularized reduced models}.

\subsection{Asymptotic accuracy of the reduced models}
\label{sec:thm}
The Maxwellian iteration technique from Section \ref{sec:CE} can also be applied to the reduced models \eqref{eq:closure}, yielding conservation laws analogous to \eqref{eq:CE-conservation}.
However, due to the truncation of the series by setting $f_{\alpha}\UP[k] = 0$ for all $k > n$, the resulting conservation laws may differ slightly from those directly derived from the Boltzmann equation.

Following \cite[Chapter 8]{Struchtrup}, we define the order of accuracy for a reduced model by examining the discrepancy as follows:
\begin{definition}[Order of accuracy]
\label{def:OOA}
A reduced model of the Boltzmann equation is said to have an order of accuracy $O(\Kn^n)$ if its Maxwellian iteration produces conservation laws matching the exact Boltzmann results through $O(\Kn^n)$ terms.
\end{definition}

For instance, the Euler equations, corresponding to the case $n = 0$ in \eqref{eq:closure}, have the zeroth-order accuracy.
In the case of Maxwell molecules, setting $n = 1$ produces Grad's 13-moment equations. It has been shown in \cite{Struchtrup2004} that the exact Burnett equations can be derived from these 13-moment equations through asymptotic analysis, indicating that Grad's 13-moment equations for Maxwell molecules possess an order of accuracy $O(\Kn^2)$.

At first glance, the systems \eqref{eq:closure} and \eqref{eq:closure2}\eqref{eq:fn} appear to have order of accuracy $O(\Kn^n)$, since the conservation laws in \eqref{eq:conserv_law} retain all terms up to the $n$-th order.
Surprisingly, these systems actually achieve the order of accuracy $O(\Kn^{n+1})$, as stated in the following theorem:
\begin{theorem}
    \label{thm:full}
    When $n\geqslant 2$, both the hyperbolic reduced model \eqref{eq:closure} and the regularized reduced model  \eqref{eq:closure2}\eqref{eq:fn} for the Boltzmann equation have an order of accuracy $O(\Kn^{n+1})$. 
\end{theorem}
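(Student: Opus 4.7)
The plan is to apply Maxwellian iteration directly to the reduced models \eqref{eq:closure} and \eqref{eq:closure2}--\eqref{eq:fn}, producing an asymptotic expansion in $\Kn$ for each coefficient $\hat{f}_{\alpha}^{(k)}$, and then compare the resulting reduced conservation laws term by term with the exact Boltzmann conservation laws. By Definition \ref{def:OOA}, the task reduces to showing that this comparison agrees through all terms of order $\Kn^{n+1}$ or lower. Because the reduced system retains \eqref{eq:conserv_law} in exactly the same structural form, the analysis localizes to tracking how the iterated coefficients $\hat{f}_{\alpha}^{(k)}$ diverge from the exact Chapman--Enskog components as the iteration progresses.

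The proof would proceed by induction on the total order $m = k+j$ carried in the formal series $\hat{f}_{\alpha}^{(k)} = \sum_{j\geqslant 0} \hat{f}_{\alpha}^{(k,j)}$ with $\hat{f}_{\alpha}^{(k,j)} \sim O(\Kn^{k+j})$. Two structural facts drive the induction. First, by Theorem \ref{thm:order} and the recursive construction of $\bbF^{(k)}$ in \eqref{eq:def_VF}, the space $\bbV^{(n)}$ is large enough to contain every basis function produced by applying $\mLd(\partial_t + \bxi\cdot\nabla_{\bx})$ or the bilinear map $\mQ^{\ast}[\cdot,\cdot]$ to elements contributing at order $\Kn^{n+1}$; thus the orthogonal projection inherent in the reduced model discards only material of order $\Kn^{n+2}$ or higher. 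Second, the truncation in $\hmQ_{\alpha}^{\ast(k)}$, \eqref{eq:def_hmQ}, retains precisely those products $f^{(l)} f^{(m)}$ with $l+m \leqslant n+1$, so the dropped nonlinear contributions themselves live at order $\Kn^{n+2}$ and beyond. Combining these with the big-$O$ rules from Section 2.3 propagates the match through the iteration up to total order $n+1$.

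The main obstacle is to reconcile the naive expectation that the closure $f_{\alpha}^{(k)} = 0$ for $k > n$ should cost one order of accuracy in \eqref{eq:conserv_law}, since the exact $f^{(n+1)}$ already contributes at $\Kn^{n+1}$. The mechanism recovering this extra order is that iterating the reduced system produces higher-order corrections $\hat{f}_{\alpha}^{(k,j)}$ with $k+j = n+1$ that collectively reproduce the integrals $\langle \bxi\cdot\nabla_{\bx} f^{(n+1)}, \varphi \rangle$ appearing in the exact conservation laws. This reproduction hinges on the specific range $r+s \leqslant k+1$ in \eqref{eq:def_VF}, which embeds the relevant nonlinear couplings of order $\Kn^{n+1}$ inside $\bbV^{(n)}$ rather than letting them be projected away; verifying this cancellation carefully is where the bulk of the technical work lies. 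For the regularized system \eqref{eq:closure2}--\eqref{eq:fn}, the closure equation \eqref{eq:fn} is simply one algebraic step of Maxwellian iteration carried out via the invertibility of $L^{(nn)}$ on $(\ker \mL)^{\perp}$ guaranteed by Hypothesis \ref{Hypo:Fredholm}, and the narrower truncation of $\widetilde{\mQ}_{\alpha}^{\ast(k)}$ in \eqref{eq:def_wmQ} compensates for the fact that $f_{\alpha}^{(n)}$ is determined algebraically rather than by a time-evolution equation. A parallel induction then establishes the same $O(\Kn^{n+1})$ accuracy, and the hypothesis $n \geqslant 2$ ensures that both the iterative and nonlinear feedback mechanisms are fully engaged.
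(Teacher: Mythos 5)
Your overall strategy---run the Maxwellian iteration on the reduced systems and compare the resulting conservation laws with the exact ones, as Definition \ref{def:OOA} demands---is the same as the paper's, but the proposal stops exactly where the theorem lives. The mechanism by which the model, after setting $f^{(k)}=0$ for $k>n$, nevertheless reproduces the order-$\Kn^{n+1}$ flux terms is only conjectured (you write that ``verifying this cancellation carefully is where the bulk of the technical work lies''), and the sketch you give of that mechanism is not correct as stated. It is not true that $\bbV^{(n)}$ contains everything generated at order $\Kn^{n+1}$: the exact iterate $f_{n+1}$ lies in $\bbV^{(n+1)}$ and in general has a nonzero component orthogonal to $\bbV^{(n)}$---this is precisely why the naive count gives only $O(\Kn^{n})$---so the claim that ``the orthogonal projection inherent in the reduced model discards only material of order $\Kn^{n+2}$'' fails, and no argument is offered that the reduced corrections of total order $n+1$ ``collectively reproduce'' the integrals $\lg \bxi\cdot\nabla_{\bx}f^{(n+1)},\varphi\rg$.

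The paper closes this gap with two operator facts tied to the recursive construction of $\bbF^{(k)}$: (i) the exact and model Maxwellian iterates coincide \emph{exactly}, $\hat{f}_l=f_l=\widetilde{f}_l$, for all $l\leqslant n$ (Lemma \ref{lemma:DL_deter} for the linear part, Lemmas \ref{lemma:hmQs} and \ref{lemma:Pn_mQs} for the quadratic part); and (ii) the conservation-law projection of the $(n+1)$st iterate depends only on the $\bbV^{(1)}$-component of the iteration source, because $\mP_{\bbV^{(0)}}\mD\mLd\mP_{(\bbV^{(1)})^{\perp}}=0$ (Lemma \ref{lemma:DL_hess}); since on $\bbV^{(1)}$ the truncated quadratic source agrees with the exact one, $\mP_{\bbV^{(1)}}\hmQs_{n+1}=\mP_{\bbV^{(1)}}\mQs_{n+1}$, one gets $\mP_{\bbV^{(0)}}\hmD\hat{f}_{n+1}=\mP_{\bbV^{(0)}}\mD f_{n+1}$ exactly, cf. \eqref{eq:exp_fnp1}, rather than merely up to $O(\Kn^{n+2})$. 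An orthogonality statement of this kind is indispensable: without it your induction on the total order $k+j$ cannot recover the dropped $f^{(n+1)}$ flux, which is genuinely of size $\Kn^{n+1}$. Two smaller corrections: the linear system \eqref{eq:fn} is solvable because $\mP_{\mathbb{A}}\mL\mP_{\mathbb{A}}$ is invertible on the finite-dimensional space $\mathbb{A}=(\bbV^{(n-1)})^{\perp}\cap\bbV^{(n)}$ by negative definiteness (Hypothesis \ref{Hypo:semi-defi}, Proposition \ref{prop:invert}), not by the Fredholm alternative \ref{Hypo:Fredholm}; and the hypothesis $n\geqslant 2$ is used only for the regularized model, to convert the $O(\Kn^{2n})$ mismatch between $\wmD$ and $\hmD$ (property \ref{prop:wmD}) into $O(\Kn^{n+2})$, not because ``both mechanisms are fully engaged.''
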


Furthermore, a higher order of accuracy can be achieved when the collision operator is linearized:
\begin{theorem}
    \label{thm:linear}
    When $\mQ^*[\cdot,\cdot] = 0$, the model \eqref{eq:closure} has order of accuracy $O(\Kn^{2n})$, and the model \eqref{eq:closure2}\eqref{eq:fn} has order of accuracy $O(\Kn^{2n-1})$.
\end{theorem}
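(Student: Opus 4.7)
The plan is to exploit the symmetry of $\mL$ (Hypothesis~\ref{Hypo:sym}) through a Galerkin/Aubin--Nitsche-style duality argument. In the linear regime $\mQ^*\equiv 0$, the Maxwellian iteration of the reduced hyperbolic model \eqref{eq:closure} can be recast as a Galerkin iteration of the Chapman--Enskog recurrence on $\bbV\UP[n]$: the $j$-th reduced iterate $\tf^{\mathrm{red},[j]}\in\bbV\UP[n]\cap(\operatorname{Ker}\mL)^{\perp}$ is characterized by
\[
\langle \mL\tf^{\mathrm{red},[j]}, v\rangle = \Kn\langle(\partial_t+\bxi\cdot\nabla_{\bx})\tf^{\mathrm{red},[j-1]}, v\rangle \quad \forall v\in\bbV\UP[n],
\]
while the exact Chapman--Enskog iterate $\tf\UB[j]$ satisfies the same relation tested against all $v\in L^2(\IM\dd\bxi)$. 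Subtracting yields the Galerkin-orthogonality error equation for $e_j:=\tf^{\mathrm{red},[j]}-\tf\UB[j]$,
\[
\langle\mL e_j,v\rangle = \Kn\langle(\partial_t+\bxi\cdot\nabla_{\bx})e_{j-1},v\rangle \quad \forall v\in\bbV\UP[n].
\]
Since $\tf\UB[j]\in\bbW\UP[j]\subset\bbV\UP[j]\subset\bbV\UP[n]$ for $j\le n$, uniqueness (from positive-definiteness of $-\mL$ on $(\operatorname{Ker}\mL)^{\perp}$, Hypothesis~\ref{Hypo:semi-defi}) forces $e_j=0$ for $j\le n$, recovering the baseline $O(\Kn^{n+1})$ accuracy from Theorem~\ref{thm:full}.

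To reach the doubled accuracy, I would iteratively apply symmetry swaps to the conservation-law discrepancy $\langle\phi,\bxi\cdot\nabla_{\bx}e_j\rangle$ for $\phi\in\bbV\UP[0]$ and $j\ge n+1$. Writing this as $\sum_l\langle\xi_l\phi,\partial_{x_l}e_j\rangle$ and using $\xi_l\phi\in\bbF\UP[0]$, I would decompose $\xi_l\phi=\mL\psi_l\UP[1]+\phi_l^{\mathrm{ker}}$ with $\psi_l\UP[1]\in\bbV\UP[1]$ (by the construction of $\bbW\UP[1]$) and $\phi_l^{\mathrm{ker}}\in\operatorname{Ker}\mL$; the kernel part vanishes against $\partial_{x_l}e_j\in(\operatorname{Ker}\mL)^{\perp}$ (Proposition~\ref{prop:Ker}), and symmetry rewrites the remainder as $\langle\psi_l\UP[1],\mL\partial_{x_l}e_j\rangle$. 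The error equation (modulo commutators $[\partial_{x_l},\mL]$) then replaces $\mL e_j$ by $\Kn(\partial_t+\bxi\cdot\nabla_{\bx})e_{j-1}$, lifting the test function one level up the hierarchy at the cost of one factor of $\Kn$. Iterating this swap---each $\bxi\psi\UP[k]\in\bbF\UP[k]$ decomposes as $\mL\psi\UP[k+1]+(\text{kernel})$ with $\psi\UP[k+1]\in\bbV\UP[k+1]$---the chain may run for up to $n$ swaps, since $\psi\UP[n]\in\bbV\UP[n]$ is the deepest admissible Galerkin test. After $j-n$ swaps the recursion terminates on $e_n=0$, so the discrepancy vanishes for all $j\le 2n$ and the first surviving contribution is $O(\Kn^{2n+1})$, yielding accuracy $O(\Kn^{2n})$ in the sense of Definition~\ref{def:OOA}. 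For the regularized system \eqref{eq:closure2}--\eqref{eq:fn}, the same scheme applies except that the algebraic closure \eqref{eq:fn} discards the $\partial_t f_\alpha\UP[n]$ term; since $f_\alpha\UP[n]\sim O(\Kn^n)$, this removal injects an $O(\Kn^{n+1})$ error at the $n$-th level and costs one order of accuracy, producing the claimed $O(\Kn^{2n-1})$.

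The main obstacle is controlling the commutators $[\partial_{x_l},\mL]$, $[\partial_t,\mL]$, and analogous corrections arising from the $\bx,t$-dependence of the inner-product weight $\IM$---all due to the dependence of $\mL$ and $\mM$ on $\bu(t,\bx)$ and $\theta(t,\bx)$. Hypothesis~\ref{Hypo:smooth} guarantees these commutators act smoothly, but one must verify that their outputs still lie in some $\bbF\UP[l]$ of sufficiently low level to be absorbed into the next step of the inductive swap, so that the telescoping gain of $\Kn$ factors is preserved across all $n$ iterations.
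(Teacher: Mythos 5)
Your overall strategy is recognizably the dual form of the paper's argument: the paper runs a descending recursion on the solution side (the quantities $\mathcal{F}^{(k)}_l=\mP_{\bbV^{(k)}}\mD f_l$ in Step 2 of its proof, gaining one factor of $\Kn$ per level via Lemmas \ref{lemma:DL_hess} and \ref{lemma:DL_deter}), while you run the same telescoping on the test-function side, and your negative-definiteness argument for $e_j=0$, $j\leqslant n$, is a legitimate variant of the paper's Step 1. The genuine gap is precisely the step you defer to the end as ``the main obstacle.'' For the chain of swaps to telescope, you must show that every effective test function generated by an integration by parts --- $\partial_t\psi^{(k)}$, $\xi_j\psi^{(k)}$, $\partial_{u_i}\psi^{(k)}$, $\partial_{\theta}\psi^{(k)}$, and the extra terms coming from the $(t,\bx)$-dependence of the weight $\mM^{-1}$ --- lies in $\rspan(\bbF^{(k)})$, so that its $\mLd$-image lies in $\bbV^{(k+1)}$ and, as long as $k+1\leqslant n$, is an admissible Galerkin test for the error equation (note that these derivative terms themselves are generally \emph{not} in $\bbV^{(n)}$, so you cannot apply Galerkin orthogonality to them directly; they must be pushed through the next $\mLd$-decomposition). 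This is not an analytic estimate to be ``controlled'' but the structural heart of the proof: it is exactly what the recursive definition \eqref{eq:def_VF} of $\bbF^{(k)}$ is designed to guarantee, and it is the content of Lemma \ref{lemma:DL_hess} (with Proposition \ref{prop:invert} and Lemma \ref{lemma:DL_deter} doing the corresponding work for the reduced operators). Without carrying this out, the gain of one power of $\Kn$ per swap --- the entire source of the doubled order --- is asserted, not proved.

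The second gap is the regularized case. Your accounting (``dropping $\partial_t f^{(n)}_{\alpha}$ injects an $O(\Kn^{n+1})$ error at the $n$-th level and costs one order'') does not follow from your own telescoping logic: a perturbation of size $O(\Kn^{n+1})$ entering at the deepest level would acquire further factors of $\Kn$ on its way down the chain and would cost nothing, so this argument cannot distinguish $O(\Kn^{2n-1})$ from $O(\Kn^{2n})$. The actual mechanism in the paper is structural: because the level-$n$ relation \eqref{eq:fn} is algebraic, the regularized dynamics can be matched to the exact dynamics only after projection onto $\bbV^{(n-1)}$ (property \ref{prop:wmD}, with an $O(\Kn^{2n})$ discrepancy originating from the truncated coefficients \eqref{eq:def_wB}), so the descending chain starts at $\wmF^{(n-1)}_n$ instead of $\hmF^{(n)}_n$ and reaches level $0$ at iteration $2n-1$ rather than $2n$; in your language, the deepest admissible swap level shrinks from $n$ to $n-1$. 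You would also need to verify that the modified coefficients $\wB_{\alpha,\beta}^{(kl)}$ perturb the lower-level equations only at order $O(\Kn^{2n})$, a point your proposal does not address at all.
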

The proofs of Theorems \ref{thm:full} and \ref{thm:linear} will be presented in Sections \ref{sec:lin_order} and \ref{sec:full_order}.

We now highlight two notable special cases of the above theorems:
\begin{itemize}
    \item When $n = 1$, the hyperbolic models for both linearized and quadratic collision operators exhibit the same order of accuracy, $O(\Kn^2)$, commonly known as the ``Burnett order" in the context of Chapman-Enskog theory.
    \item When $n = 2$, the regularized models for both types of collision operators achieve an order of accuracy of $O(\Kn^3)$, referred to as the ``super-Burnett order".
\end{itemize}
In both cases, the systems essentially contain 13 variables, a configuration that has been widely studied in the literature. We will provide a more detailed discussion of these 13-moment models in the next section.
It is also worth noting that for larger values of $n$, the orders of accuracy for the linear and nonlinear models will begin to diverge.

\section{Applications to the 13-moment model}
\label{sec:application}

The original 13-moment model proposed by Grad \cite{Grad} consisted of five equilibrium variables plus their fluxes, including the stress tensor $\sigma_{ij}$ and heat flux $q_i$.
Subsequent works have explored different choices for these 13 variables while always retaining the five conservative moments \cite{Cai2014, Struchtrup2022, LinR13}. 

In Grad's approach, the distribution function is approximated as:
\begin{displaymath}
f \approx \rho \mathcal{M} + \sum_{i,j=1}^3 \frac{\sigma_{ij}}{2\theta^2} \varphi_{ij}\UP[1] + \sum_{i=1}^3 \frac{2}{5} \frac{q_i}{\theta^2} \varphi_i\UP[1],
\end{displaymath}
where the basis functions are defined by
\begin{displaymath}
\varphi_{ij}\UP[1] = \left[ (\xi_i - u_i) (\xi_j - u_j) - \frac{1}{3} \delta_{ij} |\bxi-\bu|^2 \right] \mM, \qquad
\varphi_i\UP[1] = (\xi_i - u_i) \left[ \frac{|\bxi-\bu|^2}{2\theta} - \frac{5}{2} \right] \mM,
\end{displaymath}
with $\sigma_{ij} = \lg f, \varphi_{ij}\UP[1] \rg$, $q_i = \lg f, \varphi_i\UP[1] \rg$.
The GENERIC-13 model \cite{Struchtrup2022} generalizes this approach by introducing a temperature tensor $\Theta_{ij}$, leading to a modified definition of the moments. 

Our work follows \cite{LinR13}, where the distribution function $f$ is approximated through its projection onto the space $\bbV\UP[1]$.  
To construct the moment system using this method, we first establish a set of basis functions for $\bbV\UP[1]$ and confirm its 13-dimensional structure. By \eqref{eq:def_F0} and \eqref{eq:def_VF}, direct computation yields
\begin{align}
\bbW\UP[1] &= \mLd(\bbF\UP[0]) \cup \{\mM\} = \{\mM\} \cup \left\{\mLd(\xi_i\xi_j\mM),\,\mLd(\xi_i|\bxi|^2\mM)\,\Big|\, i,j=1,2,3\right\}, \\
\label{eq:exp_V1}
\bbV\UP[1] &= \rspan(\bbW\UP[1]) \oplus \bbV\UP[0] = \rspan\left\{\mM,\, \xi_i\mM,\, |\bxi|^2\mM,\, \phi_{ij}\UP[1],\, \phi_{i}\UP[1]\,\Big|\,i,j=1,2,3\right\},
\end{align}
where
\begin{equation} \label{eq:V1basis}
    \phi_{ij}\UP[1]=C_2 \mLd\varphi_{ij}\UP[1], \quad
    \phi_{i}\UP[1]=C_1 \mLd \varphi_i\UP[1], \qquad i,j=1,2,3.
\end{equation}
The normalization constants $C_2$ and $C_1$ are chosen such that
\begin{displaymath}
\langle \phi_{ij}\UP[1], \phi_{ij}\UP[1] \rangle_{\IM} = \langle \varphi_{ij}\UP[1], \varphi_{ij}\UP[1] \rangle_{\IM}, \quad
\langle \phi_i\UP[1], \phi_i\UP[1] \rangle_{\IM} = \langle \varphi_i\UP[1], \varphi_i\UP[1] \rangle_{\IM}, \quad i,j=1,2,3.
\end{displaymath}
The traceless symmetry $\phi_{ij}\UP[1] = \phi_{ji}\UP[1]$ and $\sum_{i=1}^3 \phi_{ii}\UP[1] = 0$ ensures $\dim \bbV\UP[1] = 13$.

Thus, the variables in the reduced model are $\rho, u_i, \theta, \osigma_{ij}, \oq_i$, where
\begin{equation}
    \label{eq:def_osigma_oq}
    \osigma_{ij}=\lg f,\, \phi_{ij}\UP[1]\rg,\qquad \oq_i=\lg f,\, \phi_{i}\UP[1]\rg, \qquad i,j = 1,2,3,
\end{equation}
satisfying $\osigma_{ij} = \osigma_{ji}$ and $\osigma_{11} + \osigma_{22} + \osigma_{33} = 0$.
By Theorem \ref{thm:order}, it holds $\rho, \bu, \theta\sim O(1)$ and $\osigma_{ij}, \oq_i\sim O(\Kn)$.
The first-order part of the distribution function $f\UP[1]$ can then be expressed as 
\begin{equation} \label{eq:expan_f1}
f\UP[1] = \sum_{i,j=1}^3 \frac{\osigma_{ij}}{2\theta^2} \phi_{ij}\UP[1] + \sum_{i=1}^3 \frac{2}{5} \frac{\oq_i}{\theta^2} \phi_i\UP[1].
\end{equation}

The evolution equations for $\osigma_{ij}$ and $\oq_j$ are derived by projecting the Boltzmann equation onto the basis functions $\phi_{ij}\UP[1]$ and $\phi_i\UP[1]$. Using the identity
\begin{equation}
    \label{eq:temporal_13}
    \left\lg\pd{f}{t}, \phi\right\rg=\pd{}{t}\left\lg f, \phi\right\rg-\left\lg f, \pd{\phi}{t}\right\rg,
\end{equation}
we obtain
\begin{equation}
    \label{eq:general_13}
    \begin{split}
        &\pd{\osigma_{ij}}{t}-\left\lg f, \pd{\phi_{ij}\UP[1]}{t}\right\rg+\left\lg \bxi\cdot \nabla_{\bx}f, \phi\UP[1]_{ij} \right\rg=\frac{1}{\Kn}\left\lg \mL[f], \phi\UP[1]_{ij}\right\rg+\frac{1}{\Kn}\left \lg \mQs[f,f], \phi\UP[1]_{ij}\right\rg, \\ 
        &\pd{\oq_i}{t}-\left\lg f, \pd{\phi_i\UP[1]}{t}\right\rg+\left\lg \bxi\cdot \nabla_{\bx}f, \phi\UP[1]_i \right\rg=\frac{1}{\Kn}\left\lg \mL[f], \phi\UP[1]_i\right\rg+\frac{1}{\Kn}\left \lg \mQs[f,f], \phi\UP[1]_i\right\rg.
    \end{split}
\end{equation}
The complete 13-moment system consists of these equations coupled with the conservation laws \eqref{eq:conserv_law}. To fully close the system, we must further specify $f\UP[k]$, ensuring that the model remains consistent with the variables used in the 13-moment formulation.

In the following subsections, we will introduce two closure methods for this model:
\begin{itemize}
\item Burnett-order closure ($n=1$ in \eqref{eq:closure}).
\item Super-Burnett-order closure ($n=2$ in \eqref{eq:closure2}\eqref{eq:fn}).
\end{itemize}

\begin{remark}
In Grad's 13-moment equations, the definitions of the stress tensor $\sigma_{ij}$ and the heat flux $q_i$ are independent of the collision kernel. In contrast, the variables $\osigma_{ij}$ and $\oq_i$ in our model depend on the linear part of the collision operator $\mL$ through the choice of basis functions \eqref{eq:V1basis}.
For Maxwell molecules, $\varphi_{ij}\UP[1]$ and $\varphi_i\UP[1]$ are eigenfunctions of $\mL$. Consequently, this leads to $\phi_{ij}\UP[1] = \varphi_{ij}\UP[1]$ and $\phi_i\UP[1] = \varphi_i\UP[1]$, making $\osigma_{ij}, \oq_i$ identical to $\sigma_{ij}, q_i$.
\end{remark}

\begin{remark}
Strictly speaking, $\osigma_{ij}$ and $\oq_i$ are not true ``moments'' of the distribution function, since $\vp_{ij}\UP[1]/\mM$ and $\vp_i\UP[1]/\mM$ are generally not polynomials. Nevertheless, we retain the \emph{13-moment model} terminology for simplicity. In fact, for many collision models (e.g., inverse power laws), the functions $\vp_{ij}\UP[1]/\mM$ and $\vp_i\UP[1]/\mM$ approximate polynomials quite well \cite{Cai2015, LinR13}. 
\end{remark}


\subsection{13-moment system of Burnett order}
\label{sec:R13_Burnett}
The Burnett order system is constructed to achieve an order of accuracy ($O(\Kn^2)$). According to Theorem \ref{thm:full}, it can be realized by truncating the expansion at $f\approx f\UP[0]+f\UP[1]$, meaning $f\UP[k] = 0$ for all $k \geqslant 2$. Under this approximation, the 13-moment system is derived by combining the conservation laws \eqref{eq:conserv_law} with the evolution equations \eqref{eq:general_13}, leading to:
\begin{equation}
    \label{eq:Burnett_13}
    \begin{split}
        &\pd{\rho}{t}+\bu\cdot\nabla_{\bx}\rho+\rho\nabla_{\bx}\cdot{\bu}=0, \\
        &\rho\pd{u_i}{t}+\rho\bu\cdot\nabla_{\bx}u_i+\theta\pd{\rho}{x_i}+\rho\pd{\theta}{x_i}+ 
        \left\lg\bxi\cdot\nabla_{\bx}f\UP[1], (\xi_i-u_i)\mM\right\rg=0, \\ 
        &\frac32\rho\pd{\theta}{t}+\frac32\rho\bu\cdot\nabla_{\bx}{\theta}+\rho\theta\nabla_{\bx}\cdot\bu+ 
        \left\lg\bxi\cdot\nabla_{\bx}f\UP[1], \left(\frac{|\bxi-\bu|^2}{2}-\frac{3\theta}2\right)\mM\right\rg=0, \\
        &\pd{\osigma_{ij}}{t}-\left\lg f\UP[0]+f\UP[1], \pd{\phi_{ij}\UP[1]}{t}\right\rg+\left\lg \bxi\cdot \nabla_{\bx}(f\UP[0]+f\UP[1]), \phi\UP[1]_{ij} \right\rg=\\
        &\qquad \qquad \qquad \qquad \qquad \frac{1}{\Kn}\left\lg \mL[f\UP[1]], \phi\UP[1]_{ij}\right\rg+\frac{1}{\Kn}\left \lg \mQ[f\UP[1],f\UP[1]], \phi\UP[1]_{ij}\right\rg, \\ 
        &\pd{\oq_i}{t}-\left\lg f\UP[0]+f\UP[1], \pd{\phi_i\UP[1]}{t}\right\rg+\left\lg \bxi\cdot \nabla_{\bx}(f\UP[0]+f\UP[1]), \phi\UP[1]_i \right\rg=\\
        &\qquad \qquad \qquad \qquad \qquad \frac{1}{\Kn}\left\lg \mL[f\UP[1]], \phi\UP[1]_i\right\rg+\frac{1}{\Kn}\left \lg \mQ[f\UP[1],f\UP[1]], \phi\UP[1]_i\right\rg,
    \end{split}
\end{equation}
where $f\UP[0]=\rho\mM$, and $f\UP[1]$ is given by \eqref{eq:expan_f1}. In these equations, the evolutions of $\osigma_{ij}$ and $\oq_i$
are handled following
\begin{align}
    & \left\lg f\UP[0] + f\UP[1], \pd{\phi_{ij}\UP[1]}{t} \right\rg = \left\lg f\UP[1], \pd{\phi_{ij}\UP[1]}{\theta} \right\rg (\mE_{\theta}\UP[0]+\mE_{\theta}\UP[1]) + \sum_{k=1} ^3 \left\lg f\UP[1], \pd{\phi_{ij}\UP[1]}{u_k} \right\rg (\mE_{u_k}\UP[0]+\mE_{u_k}\UP[1]), \label{eq:phi_ij} \\
    & \left\lg f\UP[0] + f\UP[1], \pd{\phi_i\UP[1]}{t} \right\rg = \left\lg f\UP[1], \pd{\phi_i\UP[1]}{\theta} \right\rg (\mE_{\theta}\UP[0]+\mE_{\theta}\UP[1]) + \sum_{k=1} ^3 \left\lg f\UP[1], \pd{\phi_i\UP[1]}{u_k} \right\rg (\mE_{u_k}\UP[0]+\mE_{u_k}\UP[1]), \label{eq:phi_i}
\end{align} 
where \eqref{eq:time_derivative}, \eqref{eq:re_conserv} and Proposition \ref{prop:Ker} are utilized. 
Substituting \eqref{eq:phi_ij} and \eqref{eq:phi_i} into \eqref{eq:Burnett_13} leads to a complete 13-moment system that only involves first-order derivatives.

For Maxwell molecules, the approximation $f \approx f\UP[0] + f\UP[1]$ aligns with the ansatz in Grad's moment method, making the resulting system \eqref{eq:Burnett_13} identical to Grad's 13-moment equations.
For general gas molecules, a different 13-moment system of Burnett order has been derived in \cite{Struchtrup2005R13} using the order of magnitude method. Although our derivation is also based on an analysis of magnitudes, the resulting equations differ due to the choice of variables. Specifically, our choice ensures that the system is linearly stable for any collision model, consistent with the stability analysis in \cite{LinR13}. The readers can also refer to \cite{Bunger2023} for more details.

\subsection{13-moment system with super-Burnett order}
\label{sec:R13_super_Burnett}
As mentioned at the end of Section \ref{sec:result}, the regularized model with $n=2$ attains super-Burnett accuracy ($O(\Kn^3)$) through the following system:
\begin{equation}
    \label{eq:super_Burnett_13}
    \begin{split}
        &\pd{\rho}{t}+\bu\cdot\nabla_{\bx}\rho+\rho\nabla_{\bx}\cdot{\bu}=0, \\
        &\rho\pd{u_i}{t}+\rho\bu\cdot\nabla_{\bx}u_i+\theta\pd{\rho}{x_i}+\rho\pd{\theta}{x_i}+ 
        \left\lg\bxi\cdot\nabla_{\bx}(f\UP[1]+f\UP[2]), (\xi_i-u_i)\mM\right\rg=0, \\ 
        &\frac32\rho\pd{\theta}{t}+\frac32\rho\bu\cdot\nabla_{\bx}{\theta}+\rho\theta\nabla_{\bx}\cdot\bu+ 
        \left\lg\bxi\cdot\nabla_{\bx}(f\UP[1]+f\UP[2]), \left(\frac{|\bxi-\bu|^2}{2}-\frac{3\theta}2\right)\mM\right\rg=0, \\
        &\pd{\osigma_{ij}}{t}-\left\lg f\UP[1], \pd{\phi_{ij}\UP[1]}{\theta}\right\rg\omE_{\theta}\UP[2]-\sum_{k=1}^3\left\lg f\UP[1], \pd{\phi_{ij}\UP[1]}{u_k}\right\rg\omE_{u_k}\UP[2]-\left\lg f\UP[1], \pd{\phi_{ij}\UP[2]}{\theta}\right\rg\omE_{\theta}\UP[1] \\
        &\quad -\sum_{k=1}^3\left\lg f\UP[2], \pd{\phi_{ij}\UP[1]}{u_k}\right\rg\omE_{u_k}\UP[1] +\left\lg \bxi\cdot \nabla_{\bx}(f\UP[0]+f\UP[1]+f\UP[2]), \phi\UP[1]_{ij} \right\rg=\\
        &\quad \frac{1}{\Kn}\left\lg \mL[f\UP[1]+f\UP[2]], \phi\UP[1]_{ij}\right\rg+\frac{1}{\Kn}\left \lg \mQ[f\UP[1],f\UP[1]]+\mQ[f\UP[1], f\UP[2]]+\mQ[f\UP[2],f\UP[1]], \phi\UP[1]_{ij}\right\rg, \\ 
        &\pd{\oq_i}{t}-\left\lg f\UP[1], \pd{\phi_{i}\UP[1]}{\theta}\right\rg\omE_{\theta}\UP[2]-\sum_{k=1}^3\left\lg f\UP[1], \pd{\phi_{i}\UP[1]}{u_k}\right\rg\omE_{u_k}\UP[2]-\left\lg f\UP[1], \pd{\phi_{i}\UP[2]}{\theta}\right\rg\omE_{\theta}\UP[1] \\
        &\quad -\sum_{k=1}^3\left\lg f\UP[2], \pd{\phi_{i}\UP[1]}{u_k}\right\rg\omE_{u_k}\UP[1]+\left\lg \bxi\cdot \nabla_{\bx}(f\UP[0]+f\UP[1]+f\UP[2]), \phi\UP[1]_i \right\rg=\\
        &\quad \frac{1}{\Kn}\left\lg \mL[f\UP[1]+f\UP[2]], \phi\UP[1]_i\right\rg+\frac{1}{\Kn}\left \lg \mQ[f\UP[1],f\UP[1]]+\mQ[f\UP[1], f\UP[2]]+\mQ[f\UP[2],f\UP[1]], \phi\UP[1]_i\right\rg,
    \end{split}
\end{equation}
where $f\UP[0]=\rho\mM$ and $f\UP[1]$ is determined by \eqref{eq:expan_f1}. The notation $\omE$ is defined as
\begin{equation}
    \label{eq:def_omE}
    \omE_{\theta}\UP[l]=\sum_{j=0}^l \mE_{\theta}\UP[j], \qquad \omE_{u_k}\UP[l]=\sum_{j=0}^l \mE_{u_k}\UP[j].
\end{equation}
%
The second-order term $f\UP[2]$ is introduced to achieve the order of accuracy $O(\Kn^3)$ and is expressed as:
\begin{equation} \label{eq:f2}
\begin{split}
    f\UP[2] &= \Kn (\mL\UP[2])^{-1} \mP_{(\bbV\UP[1])^{\perp} \cap \bbV\UP[2]} \Bigg[ \omE\UP[1]_{\theta}\left(\sum_{i,j=1}^3 \frac{\osigma_{ij}}{2\theta^2} \pd{\phi_{ij}\UP[1]}{\theta} + 
    \sum_{i=1}^3 \frac{2}{5} \frac{\oq_i}{\theta^2} \pd{\phi_i\UP[1]}{\theta}\right)+ \\
    & \sum_{k=1}^3\omE_{u_k}\UP[1]\left(\sum_{i,j}^3 \frac{\osigma_{ij}}{2\theta^2} \pd{\phi_{ij}\UP[1]}{u_k} + \sum_{i=1}^3 \frac{2}{5} \frac{\oq_i}{\theta^2} \pd{\phi_i\UP[1]}{u_k}\right)
    +\bxi\cdot\nabla_{\bx}(f\UP[0]+f\UP[1]) - \frac{1}{\Kn} \mQ[f\UP[1], f\UP[1]]\Bigg],
\end{split}
\end{equation}
where $\mL\UP[2]$ is an operator on $(\bbV\UP[1])^{\perp} \cap \bbV\UP[2]$ satisfying $\mL\UP[2] g = \mP_{(\bbV\UP[1])^{\perp} \cap \bbV\UP[2]} \mL g$ for any $g \in (\bbV\UP[1])^{\perp} \cap \bbV\UP[2]$. The invertibility of $\mL\UP[2]$ is shown in Proposition \ref{prop:invert}. Notably, the equation \eqref{eq:f2} is equivalent to \eqref{eq:fn} when $n=2$.

This expression of $f\UP[2]$ contains at most first-order spatial derivatives. Substituting \eqref{eq:f2} into \eqref{eq:super_Burnett_13} yields a model containing only second-order derivatives.
Comparing with the standard super-Burnett equations \cite{Struchtrup} which include fourth-order derivatives, this regularized 13-moment model clearly has a much neater form.

The precise expressions requires characterizing the subspace $\bbV\UP[2]$, which depends on the collision model. 
For Maxwell molecules, $\dim \bbV\UP[2] = 26$, and the equations \eqref{eq:super_Burnett_13} are identical to the regularized 13-moment equations in \cite{Struchtrup}. However, for general collision operators, the subspace $\bbV\UP[2]$ may include more than $80$ dimensions, which makes the system very complicated.
Detailed studies will be presented in our future work.
\section{Proofs of theorems}
\label{sec:validation}
This section provides mathematical proofs of the theorems stated in Section \ref{sec:result}. We first validate the properties of the function spaces $\bbV\UP[k]$ (Theorem \ref{thm:order}) in Section \ref{sec:proof_thm_order}, then prove the order of accuracy for our reduced models (Theorems \ref{thm:full} and \ref{thm:linear}) in Section \ref{sec:proof_model_order}.

\subsection{Proof of Theorem \ref{thm:order}}
\label{sec:proof_thm_order}
\begin{proof}[Proof of Theorem \ref{thm:order}]
    The proof proceeds by induction. 
    
    We first consider statement \ref{state:finite}. For the base case $k=0$, the definition \eqref{eq:def_V0}\eqref{eq:def_F0} shows:
    \begin{itemize}
    \item $\bbW\UP[0]$ contains 1 element.
    \item $\bbF\UP[0]$ contains 13 elements.
    \item $\bbV\UP[0]$ consists of 5 dimensions.
    \end{itemize}
    Now we assume that $\bbW\UP[0], \cdots, \bbW\UP[k-1]$ and $\bbF\UP[0], \cdots, \bbW\UP[k-1]$ are all finite. By definition \eqref{eq:def_VF}, the finite nature of $\bbF\UP[k-1]$ and the bijectivity of $\mL$ on $(\bbV\UP[0])^{\perp}$ imply that $\bbW\UP[k-1]$ is also finite, resulting in the finite dimensionality of $\bbV\UP[k]$. Furthermore, $\bbF\UP[k]$ is also finite because $\bbW\UP[0], \bbW\UP[1], \cdots, \bbW\UP[k]$ all contain finite elements. By induction, statement \ref{state:finite} is validated for all $k$.

    We now address statement \ref{state:order}. When $k=0$, we have $f=f_{\rm eq}+g$ with $g\in\left(\bbV\UP[0]\right)^{\perp}$ by \eqref{eq:Maxwellian}, and the Chapman-Enskog expansion in Section \ref{sec:CE} implies $g = O(\Kn)$.
    For any $\psi\UP[1] \in \left(\bbV\UP[0]\right)^{\perp}$, it can be noticed that
    \begin{equation}
        \label{eq:proof_k0_p2}
        \lg \psi\UP[1], f \rg=\lg \psi\UP[1], f_{\rm eq}+g \rg=\lg\psi\UP[1], g \rg = O(\Kn),
    \end{equation}
    which verifies the base case $k=0$.

    Suppose for all $l = 0, 1, \cdots, k-1$, it holds that $\lg \psi\UP[l+1], f \rg = O(\Kn^{l+1})$ for any $\psi\UP[l+1] \in (\bbV\UP[l])^{\perp}$, and our objective is to show that $\lg \psi\UP[k+1], f \rg = O(\Kn^{k+1})$ for all $\psi\UP[k+1] \in (\bbV\UP[k])^{\perp}$.
    We decompose $f$ into $k+1$ terms:
    \begin{equation}
        \label{eq:exp_f_p1}
        f=\sum_{l=0}^{k-1}f\UP[l]+f^R,
    \end{equation}
    where
    \begin{equation}
        \label{eq:exp_f_p2}
        f\UP[0]=\feq=\rho \mM\in\bbV\UP[0],\quad f\UP[l] = \mP_{\bbV\UP[l]} f - \mP_{\bbV\UP[l-1]} f \in\left(\bbV\UP[l-1]\right)^{\perp}\cap \bbV\UP[l], \quad 1\leqslant l \leqslant k-1,
    \end{equation}
    which yields $f^R\in\left(\bbV\UP[k-1]\right)^{\perp}$. By the induction hypothesis, we know $f\UP[l]=O(\Kn^l)$ for $l\leqslant k-1$ and $f^R = O(\Kn^k)$.

    Since $\psi\UP[k+1]\in \left(\bbV\UP[k]\right)^{\perp}\subset \left(\bbV\UP[0]\right)^{\perp}$, it follows that $\psi\UP[k+1] = \mL \mLd \psi\UP[k+1]$. Consequently,
    \begin{equation}
        \label{eq:thm1_proof_p1}
        \begin{split}
            \left\lg f, \psi\UP[k+1]\right\rg&=\left\lg f, \mL\mLd\psi\UP[k+1]\right\rg=
            \left\lg \mLd (\mL f), \psi\UP[k+1]\right\rg \\
            &=\left\lg \mLd\left(\Kn\pd{f}{t}+\Kn\bxi\cdot \nabla_{\bx}f- \mQs[f,f]\right),  \psi\UP[k+1] \right\rg,
        \end{split}
    \end{equation}
    where we have used the symmetry of $\mL$ and substituted the Boltzmann equation \eqref{eq:re_Boltz} for $\mL f$. 
    
    To proceed, we decompose \eqref{eq:thm1_proof_p1} into two terms as follows:
    $$\Kn\pd{f}{t}+\Kn\bxi\cdot \nabla_{\bx}f- \mQs[f,f]=F\UP[k-1]+G\UP[k-1],$$ 
    where the principal part is given by
    \begin{equation}
        \label{eq:Fk}
        F\UP[k-1]= \Kn\sum_{l=0}^{k-1}\pd{f\UP[l]}{t}+\Kn\sum_{l=0}^{k-1}\bxi\cdot\nabla_{\bx}f\UP[l]-\sum_{\substack{r,s=1,\cdots,k-1\\ r+s\leqslant k}}\mQ[f\UP[r],f\UP[s]],
    \end{equation}
    and the remainder $G\UP[k-1]$ has order $O(\Kn^{k+1})$ according to the decomposition \eqref{eq:exp_f_p1}.
    
    We now demonstrate that $\mLd F\UP[k-1]$ is orthogonal to $\psi\UP[k+1]$, thereby confirming the desired magnitude of \eqref{eq:thm1_proof_p1}.
    Since $f\UP[l] \in \bbV\UP[l]$ and $\mP_{\bbV\UP[0]}f\UP[l] \in \rspan{\mM}$, it holds $f\UP[l]\in\rspan(\bbW\UP[l])$, allowing us to represent $f\UP[l]$ as a linear combination of the functions in $\bbW\UP[l]$:
    \begin{equation}
        \label{eq:expan_fl}
        f\UP[l]=\sum_{\alpha=1}^{N\UP[l]} w\UP[l]_{\alpha}\vp_{\alpha}\UP[l],
    \end{equation}
    where $\{\vp_{\alpha}\UP[l]\mid \alpha = 1,\cdots, N\UP[l]\} = \bbW\UP[l]$.
    The terms in \eqref{eq:Fk} are thus written as the following summations:
    \begin{equation}
        \begin{aligned}
            \pd{f\UP[l]}{t}&=\sum_{\alpha=1}^{N\UP[l]} \left(\pd{w\UP[l]_{\alpha}}{t}\vp_{\alpha}\UP[l]+
            w\UP[l]_{\alpha}\sum_{i=1}^3\pd{u_i}{t}\pd{\vp_{\alpha}\UP[l]}{u_i}+
            w\UP[l]_{\alpha}\pd{\theta}{t}\pd{\vp_{\alpha}\UP[l]}{\theta}\right), \\
            \bxi \cdot \nabla_{\bx} f\UP[l] &= \sum_{\alpha=1}^{N\UP[l]} \sum_{j=1}^3 \left(\pd{w\UP[l]_{\alpha}}{x_j}\xi_j\vp_{\alpha}\UP[l]+
            w\UP[l]_{\alpha}\sum_{i=1}^3\pd{u_i}{x_j}\xi_j\pd{\vp_{\alpha}\UP[l]}{u_i}+w\UP[l]_{\alpha}
            \pd{\theta}{x_j}\xi_j\pd{\vp_{\alpha}\UP[l]}{\theta}\right), \\
            \mQ^*[f\UP[r], f\UP[s]] &= \sum_{\alpha=1}^{N\UP[r]} \sum_{\beta=1}^{N\UP[s]} w_{\alpha}\UP[r] w_{\beta}\UP[s] \mQ^*[\varphi_{\alpha}\UP[r], \varphi_{\beta}\UP[s]].
        \end{aligned}
    \end{equation}
    Due to the nesting property of $\bbW\UP[k]$, \textit{i.e.} $\bbW\UP[0] \subset \bbW\UP[1] \subset \cdots \subset \bbW\UP[k-1]$, all these terms are linear combinations of functions in $\bbF\UP[k-1]$ defined in \eqref{eq:def_VF}. This ensures that $F\UP[k-1] \in \rspan (\bbF\UP[k-1])$, and consequently $\mLd F\UP[k-1] \in \bbV\UP[k]$.
    Thus, given $\psi\UP[k+1] \perp \bbV\UP[k]$, the magnitude of \eqref{eq:thm1_proof_p1} can obtained by
    \begin{equation}
        \label{eq:thm1_proof_p2}
        \begin{split}
        \lg f, \psi\UP[k+1] \rg &= \lg \mLd F\UP[k-1], \psi\UP[k+1] \rg + \lg \mLd G\UP[k-1], \psi\UP[k+1] \rg \\
        &= \lg \mLd G\UP[k-1], \psi\UP[k+1] \rg = O(\Kn^{k+1}),
        \end{split}
    \end{equation}
    which completes the induction. 
\end{proof}

\subsection{Proofs of Theorems \ref{thm:full} and \ref{thm:linear}}
\label{sec:proof_model_order}
In Section \ref{sec:thm}, the orders of accuracy for our reduced models are summarized in Theorems \ref{thm:full} and \ref{thm:linear}. Before presenting the proofs, we provide several preliminary results.

\subsubsection{Some lemmas}
\label{sec:lemmas}
For simplicity, we first define a linear operator $\mD$ as
\begin{equation}
    \label{eq:def_mD}
    \mD f=\pd{f}{t}+{\bxi}\cdot\nabla_{\bx}f.
\end{equation}
With this definition, the Boltzmann equation \eqref{eq:re_Boltz} can be reformulated as
\begin{displaymath}
\mD f = \frac{1}{\Kn} \mL f + \frac{1}{\Kn} \mQ^*[f,f],
\end{displaymath}
and the Maxwellian iteration \eqref{eq:re_Maxwell_iter} is rewritten as
\begin{equation}
    \label{eq:origin_Maxwell}
    f_l = f\UP[0] + \mLd \left( \Kn \mD f_{l-1} - \mQs_l \right).
\end{equation}

The hyperbolic reduced model \eqref{eq:closure} can be expressed in operator form for a function $\hf\in\bbV\UP[n]$:
\begin{equation}
    \label{eq:oper_moment}
    \hmD \hat{f} = \frac{1}{\Kn} \hmL \hat{f} + \frac{1}{\Kn}\hmQs[\hf,\hf], 
\end{equation}
where $\hmD = \mP_{\bbV\UP[n]}\mD\mP_{\bbV\UP[n]}$, $\hmL = \mP_{\bbV\UP[n]}\mL\mP_{\bbV\UP[n]}$, and $\hmQs$ represents the truncated collision operator:
\begin{equation}
    \label{eq:def_hmQs}
    \hmQs[\hf,\hf]=\sum_{\substack{i,j \geqslant 1\\ i+j \leqslant n+1}} \mP_{\bbV\UP[n]}\mQ[(\mP_{\bbV\UP[i]} - \mP_{\bbV\UP[i-1]}) \hat{f},(\mP_{\bbV\UP[j]} - \mP_{\bbV\UP[j-1]}) \hat{f}].
\end{equation}
Correspondingly, the model's Maxwellian iteration takes the form:
\begin{equation}
\label{eq:moment_Maxwell}
\hat{f}_l = f\UP[0] + \hmLd \left( \Kn \hmD \hat{f}_{l-1} - \hmQs_l \right),
\end{equation}
where 
\begin{equation}
    \label{eq:def_hmQs_l}
    \hmQs_l=\sum_{\substack{r,s,i,j \geqslant 1\\ r+s \leqslant l\\ i+j \leqslant n+1}} \mP_{\bbV\UP[n]}\mQ[(\mP_{\bbV\UP[i]} - \mP_{\bbV\UP[i-1]})(\hat{f}_r - \hat{f}_{r-1}), (\mP_{\bbV\UP[j]} - \mP_{\bbV\UP[j-1]})(\hat{f}_s - \hat{f}_{s-1})].
\end{equation}

Notably, $\hat{f}_l \in \bbV\UP[n]$ for all $l$. 
According to the definition \ref{def:OOA}, the reduced model \eqref{eq:oper_moment} has order of accuracy $O(\Kn^k)$ if and only if $\mP_{\bbV\UP[0]}\mD f_k=\mP_{\bbV\UP[0]}\hmD\hat{f}_k+O(\Kn^{k+1})$.

Similarly, the regularized reduced model \eqref{eq:closure2}\eqref{eq:fn} and its corresponding Maxwellian iteration can be reformulated in operator form as follows:
\begin{align}
    \label{eq:oper_regularized}
    \wmD\wf &= \frac{1}{\Kn}\wmL\wf + \frac{1}{\Kn}\wmQs[\wf, \wf], \\
    \label{eq:regular_Maxwell}
    \wf_l &= f\UP[0] + \wmLd \left( \Kn \wmD \wf_{l-1} - \wmQs_l \right),
\end{align}
where $\wmL = \hmL$ and
\begin{equation}
    \label{eq:def_wmQs_l}
    \begin{split}
        \wmQs_l=&\sum_{\substack{r,s,i,j \geqslant 1\\ r+s \leqslant l\\ i+j \leqslant n+1}} \mP_{\bbV\UP[n-1]}\mQ[(\mP_{\bbV\UP[i]} - \mP_{\bbV\UP[i-1]})(\wf_r - \wf_{r-1}), (\mP_{\bbV\UP[j]} - \mP_{\bbV\UP[j-1]})(\wf_s - \wf_{s-1})] \\
        + & \sum_{\substack{r,s,i,j \geqslant 1\\ r+s \leqslant l\\ i+j \leqslant n}} \mP_{(\bbV\UP[n-1])^{\perp}\cap\bbV\UP[n]} \mQ[(\mP_{\bbV\UP[i]} - \mP_{\bbV\UP[i-1]})(\wf_r - \wf_{r-1}), (\mP_{\bbV\UP[j]} - \mP_{\bbV\UP[j-1]})(\wf_s - \wf_{s-1})].
    \end{split}
\end{equation}
The expansion of \eqref{eq:oper_regularized} using the basis functions $\varphi_{\alpha}\UP[l]$ is consistent with \eqref{eq:closure2}\eqref{eq:fn}, and the detailed definition of $\wmD$ is omitted due to its complicated form.

A direct comparison between systems \eqref{eq:closure} and \eqref{eq:closure2}\eqref{eq:fn} reveals the following properties:
\begin{enumerate}[label=\textbf{(P\arabic*)}]
    \item For any $\phi\UP[n]\in\bbV\UP[n]$, it holds $\mP_{\bbV\UP[n-1]}\wmD \phi\UP[n]= \mP_{\bbV\UP[n-1]}\hmD \phi\UP[n] + O(\Kn^{2n})$; \\for any $\phi\UP[n-1]\in\bbV\UP[n-1]$, it holds $\mP_{\bbV\UP[n]}\wmD \phi\UP[n-1]= \mP_{\bbV\UP[n]}\hmD \phi\UP[n-1] $; \label{prop:wmD}
    \item For any distributions $f,g$, it holds $\mP_{\bbV\UP[n-1]}\wmQs[f,g]=\mP_{\bbV\UP[n-1]}\hmQs[f,g]$ and $\mP_{\bbV\UP[n-1]}\wmQs[f,g] = \mP_{\bbV\UP[n-1]}\hmQs[f,g]$. \label{prop:wmQ}
\end{enumerate}
These properties can be readily verified by analyzing the differences between the two systems with $f\UP[k]=O(\Kn^k)$ in \eqref{eq:def_mE}.

Now, we present some lemmas for the operators in \eqref{eq:oper_moment}:
\begin{lemma}
    \label{lemma:DL_hess}
    For any $k\geqslant 1$, For any $k\geqslant 1$, the following identities hold:
    \begin{equation}
    \label{eq:DL_hess}
    \mP_{\bbV\UP[k-1]}\mD \mLd\mP_{(\bbV\UP[k])^{\perp}}=0, \qquad \mP_{(\bbV\UP[k])^{\perp}}\mLd\mD\mP_{\bbV\UP[k-1]}=0.
    \end{equation} 
\end{lemma}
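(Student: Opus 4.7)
The plan is to prove both identities simultaneously by exploiting the recursive design of the spaces $\bbV\UP[k]$, $\bbW\UP[k]$, and $\bbF\UP[k-1]$ together with the symmetry of $\mLd$. The second identity is essentially an ``algebraic'' statement about the structure of these spaces, while the first follows by a duality argument combined with the second.

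I would first tackle the second identity $\mP_{(\bbV\UP[k])^{\perp}} \mLd \mD \mP_{\bbV\UP[k-1]} = 0$, which is equivalent to showing $\mLd \mD \phi \in \bbV\UP[k]$ for every $\phi \in \bbV\UP[k-1]$. Decomposing $\bbV\UP[k-1] = \rspan(\bbW\UP[k-1]) + \bbV\UP[0]$, expand $\phi$ as a linear combination of basis elements with scalar coefficients in $(t,\bx)$, and apply $\mD = \partial_t + \bxi \cdot \nabla_{\bx}$ via the product and chain rules. Since each basis function depends on $(t,\bx)$ only through the parameters $\bu$ and $\theta$, the chain rule generates exactly the objects $\vp\UP[k-1]$, $\partial_{u_i}\vp\UP[k-1]$, $\partial_\theta \vp\UP[k-1]$, $\xi_j \vp\UP[k-1]$, $\xi_j \partial_{u_i}\vp\UP[k-1]$, $\xi_j \partial_\theta \vp\UP[k-1]$ with $\vp\UP[k-1] \in \bbW\UP[k-1]$, which are precisely the listed generators of $\bbF\UP[k-1]$. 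Consequently $\mD \phi \in \rspan(\bbF\UP[k-1])$ modulo elements of $\ker\mL$, and since $\mLd$ annihilates $\ker\mL$, we obtain $\mLd \mD \phi \in \mLd(\bbF\UP[k-1]) \subset \rspan(\bbW\UP[k]) \subset \bbV\UP[k]$.

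For the first identity $\mP_{\bbV\UP[k-1]} \mD \mLd \mP_{(\bbV\UP[k])^{\perp}} = 0$, I would use an adjoint argument. Starting from the product rule for $\mD$ acting on $F \cdot G \cdot \IM$ and integrating out $\bxi$, together with $\mD \IM = -\IM \cdot \mD(\log \mM)$, one derives
\[
\langle \mD F, G \rangle_{\IM} = \mD \langle F, G \rangle_{\IM} - \langle F, \mD^* G \rangle_{\IM}, \qquad \mD^* G = \mM \, \mD(G/\mM) = \mD G - G \cdot \mD(\log \mM).
\]
Take $F = \mLd g$ with $g \in (\bbV\UP[k])^{\perp}$ and $G = \phi \in \bbV\UP[k-1]$; by the symmetry of $\mLd$ one gets $\langle \mLd g, \phi \rangle = \langle g, \mLd \phi \rangle$. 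Observing that $\bbW\UP[k-1] \subset \bbF\UP[k-1]$, one finds $\mLd \phi \in \rspan(\bbW\UP[k]) \subset \bbV\UP[k]$, so this inner product vanishes and hence so does $\mD$ applied to it. The problem reduces to showing $\langle g, \mLd \mD^* \phi \rangle = 0$. Writing $\mD^* \phi = \mD \phi - \phi \cdot \mD(\log \mM)$ and invoking the second identity gives $\mLd \mD \phi \in \bbV\UP[k]$, so it remains to verify $\mLd(\phi \cdot \mD\log\mM) \in \bbV\UP[k]$. Using $\mM \cdot \mD\log\mM = \mD\mM$, multiplication by $\mD\log\mM$ becomes a polynomial operation of the same structural type as $\mD$ acting on basis functions, and the same chain-rule analysis confirms that it lands in $\rspan(\bbF\UP[k-1])$ modulo $\ker\mL$.

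The main obstacle will be the careful accounting for the $\bbV\UP[0]$-component of $\phi$ in the second identity. Applying $\mD$ to the conserved basis $(\xi_i - u_i)\mM$ or $(|\bxi - \bu|^2 - 3\theta)\mM/2$ produces polynomial factors in $\bxi$ of higher total degree than those explicitly enumerated in $\bbF\UP[k-1]$, arising from second derivatives of $\mM$ with respect to $\bu$ and $\theta$. Resolving this requires showing that the excess high-degree contributions either lie in $\ker\mL$ (so that $\mLd$ sends them to zero) or can be rewritten as linear combinations of the generators already present in $\bbF\UP[k-1]$ by exploiting the specific Gaussian-moment relations built into $\langle\cdot,\cdot\rangle_{\IM}$; the key identity to verify is that $(\xi_i - u_i)|\bxi-\bu|^2\mM$ differs from the cubic heat-flux direction already in $\bbW\UP[k]$ only by an element of $\ker\mL$.
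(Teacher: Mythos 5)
Your argument for the second identity is essentially the paper's: the paper also reduces it to the membership $\mD\vp\in\rspan(\bbF^{(k-1)})$, hence $\mLd\mD\vp\in\bbV^{(k)}$. For the first identity, however, your route diverges in a way that opens a genuine gap. The paper expands $\langle\vp^{(k-1)},\mD\mLd\vp^R\rangle$ by pulling $\partial_t$ and $\nabla_{\bx}\cdot(\bxi\,\cdot)$ across the pairing and then using the symmetry of $\mLd$, so the only objects whose $\mLd$-images must be located in $\bbV^{(k)}$ are $\vp^{(k-1)}$, $\partial_t\vp^{(k-1)}$, $\bxi\vp^{(k-1)}$ and $\nabla_{\bx}\cdot(\bxi\vp^{(k-1)})$, all of which are linear combinations of the generators of $\bbF^{(k-1)}$ by construction; the total-derivative terms are derivatives of brackets that vanish identically. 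Your adjoint detour instead produces the extra term $\phi\,\mD\log\mM$, and the step ``the same chain-rule analysis confirms that it lands in $\rspan(\bbF^{(k-1)})$ modulo $\ker\mL$'' does not hold. Indeed $\bxi\cdot\nabla_{\bx}\log\mM$ carries, with independent coefficients $\partial_{x_j}u_l$ and $\partial_{x_j}\theta$, the factors $\xi_j(\xi_l-u_l)/\theta$ and $\xi_j\bigl(|\bxi-\bu|^2/(2\theta^2)-3/(2\theta)\bigr)$, so $\phi\,\mD\log\mM$ involves $\xi_j\xi_l\phi$ and $\xi_j|\bxi-\bu|^2\phi$. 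But $\bbF^{(k-1)}$ contains only $\phi$, its parameter derivatives, and a \emph{single} multiplication by $\xi_j$ of these; for a non-polynomial $\phi\in\bbW^{(k-1)}$ (the generic case when $\mL$ is not of Maxwell or BGK type) there is no reason such double products lie in $\rspan(\bbF^{(k-1)})+\ker\mL$, so the required membership $\mLd(\phi\,\mD\log\mM)\in\bbV^{(k)}$ is unsupported and the reduction does not close. (A smaller imprecision of the same kind: your shorthand $\mD\langle F,G\rangle$ also hides the flux bracket $\nabla_{\bx}\cdot\langle\bxi\mLd g,\phi\rangle$, whose vanishing needs $\mLd(\xi_j\phi)\in\bbV^{(k)}$; that one is fine, but it should be said.)

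The resolution you sketch for your declared ``main obstacle'' is also not viable. Your key identity would require $(\xi_i-u_i)|\bxi-\bu|^2\mM$ to coincide, modulo $\ker\mL$, with an element of $\bbW^{(k)}$; but $\bbW^{(k)}$ contains $\mLd(\xi_i|\bxi|^2\mM)$, not the cubic itself, and the two agree up to $\ker\mL$ only when that cubic is an eigenfunction of $\mL$ — true for Maxwell molecules and BGK-type operators, false for the general kernels admitted by (H1)--(H6). Note moreover that within your own adjoint formalism the conserved directions cause no difficulty for the first identity: for $\phi=c(t,\bx)\,p(\bxi)\mM$ with $p\in\{1,\xi_i,|\bxi|^2\}$ one has $\mD^*\phi=p\mM\,(\partial_t c+\bxi\cdot\nabla_{\bx}c)$, which stays inside $\rspan(\bbF^{(0)})$; the place where the $\bbV^{(0)}$-component worry genuinely arises is the second identity, which involves $\mD$ rather than $\mD^*$, and there the proposed fix does not repair the argument. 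So the proposal, as written, contains two steps that would fail for general collision operators: the control of $\mLd(\phi\,\mD\log\mM)$ and the treatment of the conserved directions.
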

\begin{proof}[Proof of Lemma \ref{lemma:DL_hess}]
    It suffices to show that for any $\vp\UP[k-1]\in\bbV\UP[k-1]$ and $\vp^R\perp \bbV\UP[k]$, the following equalities are satisfied:
    \begin{equation}
        \label{eq:lemma_hess_1}
        \left\lg\vp\UP[k-1], \mD\mLd\vp^R\right\rg=0, \qquad
        \left\lg\vp^R, \mLd\mD\vp\UP[k-1]\right\rg=0.
    \end{equation}
    For the first equation in \eqref{eq:lemma_hess_1}, with direct computations, we have
    \begin{equation}
        \label{eq:D_time}
        \begin{split}
            \left\lg\vp\UP[k-1], \mD\mLd\vp^R\right\rg
            &=\left\lg\vp\UP[k-1], \pd{}{t}\mLd\vp^R\right\rg+\left\lg\vp\UP[k-1], \nabla_{\bx}\cdot(\bxi\mLd\vp^R)\right\rg \\
            &=\pd{}{t}\left\lg\vp\UP[k-1], \mLd\vp^R\right\rg-\left\lg\pd{}{t}\vp\UP[k-1], \mLd\vp^R\right\rg \\
            &\quad +\nabla_{\bx}\cdot \left\lg\bxi\vp\UP[k-1], \mLd\vp^R\right\rg-\left\lg\nabla_{\bx}\cdot (\bxi\vp\UP[k-1]), \mLd\vp^R\right\rg \\
            &=\pd{}{t}\left\lg\mLd\vp\UP[k-1], \vp^R\right\rg-\left\lg\mLd\pd{}{t}\vp\UP[k-1], \vp^R\right\rg \\
            &\quad +\nabla_{\bx}\cdot \left\lg\mLd(\bxi\vp\UP[k-1]), \vp^R\right\rg-\left\lg\mLd\Big(\nabla_{\bx}\cdot (\bxi\vp\UP[k-1])\Big), \vp^R\right\rg.
        \end{split}
    \end{equation}
    
    Following the approach used in the proof of Theorem \ref{thm:order}, we can derive 
    $$\vp\UP[k-1], \pd{}{t}\vp\UP[k-1], \bxi\vp\UP[k-1], \nabla_{\bx}\cdot (\bxi\vp\UP[k-1])\in \rspan(\bbF\UP[k-1]),$$ 
    which yields 
    $$
    \mLd\vp\UP[k-1],\, \mLd\pd{}{t}\vp\UP[k-1],\, \mLd(\bxi\vp\UP[k-1]),\, \mLd\Big(\nabla_{\bx}\cdot (\bxi\vp\UP[k-1])\Big)\in \bbV\UP[k].
    $$
    The proof is completed using $\vp^R \perp \bbV\UP[k]$.
    
    The second equation in \eqref{eq:lemma_hess_1} can be validated using the fact that $\mD\vp\UP[k-1]\in\rspan(\bbF\UP[k-1])$ and $\mLd\mD\vp\UP[k-1]\in\bbV\UP[k]$.
\end{proof}

\begin{proposition}
    \label{prop:invert}
    For a linear space $\bbA \in (\bbV\UP[0])^{\perp}$ with finite dimensions, the operator $\mP_{\bbA} \mL \mP_{\bbA}: \;\bbA\to \bbA$ is invertible.
\end{proposition}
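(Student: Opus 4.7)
The plan is to prove invertibility by showing injectivity on the finite-dimensional space $\bbA$, then invoking the rank--nullity theorem to conclude bijectivity. The argument is a short energy estimate relying on the symmetry and negative semi-definiteness of $\mL$.

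Concretely, I would take any $g \in \bbA$ in the kernel of $\mP_{\bbA} \mL \mP_{\bbA}$, so that $\mP_{\bbA} \mL g = 0$ since $\mP_{\bbA} g = g$. Pairing with $g$ and using the self-adjointness of the orthogonal projection $\mP_{\bbA}$ under the weighted inner product $\langle \cdot, \cdot \rangle_{\IM}$ gives
\[
0 = \langle \mP_{\bbA} \mL g, g \rangle_{\IM} = \langle \mL g, \mP_{\bbA} g \rangle_{\IM} = \langle \mL g, g \rangle_{\IM}.
\]
By Hypothesis \ref{Hypo:semi-defi} together with its equality condition, this forces $g \in \ker \mL = \bbV\UP[0]$. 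Combined with $g \in \bbA \subset (\bbV\UP[0])^{\perp}$, I obtain $g \in \bbV\UP[0] \cap (\bbV\UP[0])^{\perp} = \{0\}$, which establishes injectivity of $\mP_{\bbA} \mL \mP_{\bbA}$ viewed as an endomorphism of $\bbA$.

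I do not anticipate any serious obstacle; this is a standard argument for coercive symmetric operators restricted to a complement of their kernel. The only point requiring care is to confirm that the orthogonal projection $\mP_{\bbA}$ is self-adjoint with respect to the same inner product $\langle \cdot, \cdot \rangle_{\IM}$ under which $\mL$ is symmetric (Hypothesis \ref{Hypo:sym}), which is automatic since $\bbA$ is treated as a subspace of the Hilbert space $L^2(\mM^{-1} \rd \bxi)$ per the conventions of Section \ref{sec:order}. The finite-dimensionality of $\bbA$ is invoked solely in the final step to pass from injectivity to bijectivity.
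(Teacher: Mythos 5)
Your proposal is correct and follows essentially the same route as the paper's proof: reduce invertibility on the finite-dimensional space $\bbA$ to showing a trivial kernel, then use the pairing $\langle \mP_{\bbA}\mL\mP_{\bbA} g, g\rangle_{\IM} = \langle \mL g, g\rangle_{\IM}$ together with Hypothesis \ref{Hypo:semi-defi} and $\bbA \subset (\bbV\UP[0])^{\perp}$ to force $g = 0$. If anything, your write-up makes explicit the self-adjointness of $\mP_{\bbA}$ and the equality case of \ref{Hypo:semi-defi}, steps the paper leaves implicit in its contradiction argument.
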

\begin{proof}
    Since $\mP_{\bbA} \mL \mP_{\bbA}$ is a linear operator on a finite-dimensional space $\bbA$, it is sufficient to demonstrate that
    ${\rm Ker} (\mP_{\bbA} \mL \mP_{\bbA})=\{0\}$.
    
    Assume, for contradiction, that there exists a nonzero element $\phi \in \bbA$ such that $\mP_{\bbA} \mL \mP_{\bbA} \phi = 0$.
    Given that $\phi \in \bbA$ and $\bbA \subset (\bbV\UP[0])^{\perp}$, we directly have $\lg \mL\phi, \phi \rg<0 $ from Hypothesis \ref{Hypo:semi-defi}, which contradicts the assumption. 
\end{proof}

\begin{lemma}
    \label{lemma:DL_deter}
    For all $0\leqslant i\leqslant n-1$ and $0\leqslant j\leqslant n$, the following relations hold: 
    \begin{equation}
        \label{eq:DL_deter}
        \mP_{\bbV\UP[i]}\mD \mLd\mP_{\bbV\UP[j]}=\mP_{\bbV\UP[i]}\hmD \hmLd\mP_{\bbV\UP[j]},\qquad\mP_{\bbV\UP[j]}\mLd \mD \mP_{\bbV\UP[i]} = \mP_{\bbV\UP[j]}\hmLd \hmD \mP_{\bbV\UP[i]}.
    \end{equation}
\end{lemma}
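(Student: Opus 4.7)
The plan is to prove both identities by splitting into cases based on the value of $j$, exploiting the filtration property $\mLd(\bbV\UP[k])\subset\bbV\UP[k+1]$ built into the construction of the spaces $\bbV\UP[k]$ in Section \ref{sec:order}. This controls when $\mLd g$ stays inside $\bbV\UP[n]$ and when it escapes, and guides how to reduce each case to either a direct uniqueness argument for the generalized inverse or an application of Lemma \ref{lemma:DL_hess}.

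For the second identity, I would take any $g\in\bbV\UP[i]$ with $i\leqslant n-1$. Repeating the basis-expansion argument from the proof of Theorem \ref{thm:order}, $\mD g\in\rspan(\bbF\UP[i])$, hence $\mLd\mD g\in\rspan(\bbW\UP[i+1])\subset\bbV\UP[n]\cap(\bbV\UP[0])^{\perp}$. I would then verify that $\hmLd\hmD g$ also lies in $\bbV\UP[n]\cap(\bbV\UP[0])^{\perp}$ and that both elements satisfy the same equation $\hmL(\cdot)=\mP_{(\bbV\UP[0])^{\perp}\cap\bbV\UP[n]}\mD g$: for $\mLd\mD g$, this comes from applying $\mP_{\bbV\UP[n]}$ to $\mL\mLd\mD g=\mP_{(\bbV\UP[0])^{\perp}}\mD g$; for $\hmLd\hmD g$, it follows from the definition of $\hmLd$ together with $\hmD g=\mP_{\bbV\UP[n]}\mD g$. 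Invertibility of $\hmL$ on $\bbV\UP[n]\cap(\bbV\UP[0])^{\perp}$ (Proposition \ref{prop:invert}) then forces $\mLd\mD g=\hmLd\hmD g$, and the second identity follows after projecting with $\mP_{\bbV\UP[j]}$.

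For the first identity with $j\leqslant n-1$, I would reduce it to showing $\mLd g=\hmLd g$ for $g\in\bbV\UP[j]$. Since $\mLd g\in\bbV\UP[j+1]\cap(\bbV\UP[0])^{\perp}\subset\bbV\UP[n]\cap(\bbV\UP[0])^{\perp}$, and a direct computation gives $\hmL(\mLd g)=\mP_{\bbV\UP[n]}\mL\mLd g=\mP_{(\bbV\UP[0])^{\perp}\cap\bbV\UP[n]}g$, which matches the defining equation of $\hmLd g$, the same uniqueness argument gives $\mLd g=\hmLd g$. Combined with $\bbV\UP[i]\subset\bbV\UP[n]$ and $\hmLd g\in\bbV\UP[n]$, this yields $\mP_{\bbV\UP[i]}\mD\mLd g=\mP_{\bbV\UP[i]}\mD\hmLd g=\mP_{\bbV\UP[i]}\hmD\hmLd g$.

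The main obstacle is the case $j=n$, where $\mLd g$ may lie in $\bbV\UP[n+1]\setminus\bbV\UP[n]$ and equality $\mLd g=\hmLd g$ genuinely fails. The plan is to introduce $v:=\mLd g-\hmLd g$ and control $\mL v$: writing the defining relation as $\mL\hmLd g=\mP_{(\bbV\UP[0])^{\perp}\cap\bbV\UP[n]}g+w$ with $w\in(\bbV\UP[n])^{\perp}$ encoding the failure of $\mL$ to preserve $\bbV\UP[n]$, and subtracting from $\mL\mLd g=\mP_{(\bbV\UP[0])^{\perp}}g$, one obtains $\mL v=-w\in(\bbV\UP[n])^{\perp}$. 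Since $v\in(\bbV\UP[0])^{\perp}$, the identity $v=\mLd(\mL v)=-\mLd w$ holds. Finally, because $w\perp\bbV\UP[n]\supset\bbV\UP[i+1]$, Lemma \ref{lemma:DL_hess} with $k=i+1\leqslant n$ yields $\mP_{\bbV\UP[i]}\mD\mLd w=0$, hence $\mP_{\bbV\UP[i]}\mD v=0$, which finishes the proof of the first identity.
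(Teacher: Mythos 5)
Your argument is correct and rests on the same two pillars as the paper's proof---Lemma \ref{lemma:DL_hess} and the invertibility of $\mP_{\mathbb{S}}\mL\mP_{\mathbb{S}}$ on $\mathbb{S}=(\bbV^{(0)})^{\perp}\cap\bbV^{(n)}$ from Proposition \ref{prop:invert}---but it is organized quite differently. The paper first reduces to $i=n-1$, $j=n$, proves the single operator factorization $\mP_{\bbV^{(n-1)}}\mD\mP_{\mathbb{S}}=\bigl(\mP_{\bbV^{(n-1)}}\mD\mLd\mP_{\mathbb{S}}\bigr)\bigl(\mP_{\mathbb{S}}\mL\mP_{\mathbb{S}}\bigr)$ (Lemma \ref{lemma:DL_hess} discards the $(\bbV^{(n)})^{\perp}$ leakage), then inverts $\mP_{\mathbb{S}}\mL\mP_{\mathbb{S}}$ and observes that the hatted operators yield the identical factorization; this treats all $j$ uniformly. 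You instead prove the stronger pointwise identities $\hmLd g=\mLd g$ and $\hmLd\hmD g=\mLd\mD g$ for $g$ below the top level, via uniqueness of the solution of $\hmL x=\mP_{\mathbb{S}}(\cdot)$ in $\mathbb{S}$, and isolate the only delicate case $j=n$, where the defect $v=\mLd g-\hmLd g=-\mLd w$ with $w\in(\bbV^{(n)})^{\perp}$ is annihilated by $\mP_{\bbV^{(i)}}\mD$ through Lemma \ref{lemma:DL_hess}. Your route is longer but makes explicit exactly where $\mLd$ and $\hmLd$ agree and how the $j=n$ discrepancy dies, which is essentially the form in which the lemma is used in Theorems \ref{thm:full} and \ref{thm:linear}; the paper's factorization is more compact. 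Two minor points: the inclusion $\rspan(\bbW^{(i+1)})\subset(\bbV^{(0)})^{\perp}$ is not literally true (since $\mM\in\bbW^{(i+1)}$)---what you actually need is that $\mLd\mD g$ lies in the range of $\mLd$, hence in $(\bbV^{(0)})^{\perp}$, as well as in $\bbV^{(i+1)}$; and rather than re-running the basis-expansion argument for a general $g\in\bbV^{(i)}$ (whose $\bbV^{(0)}$-component need not be a multiple of $\mM$, so this is the same extension of the Theorem \ref{thm:order} argument that the paper itself makes in proving Lemma \ref{lemma:DL_hess}), you could simply quote the second identity of Lemma \ref{lemma:DL_hess} with $k=i+1$ to conclude $\mLd\mD g\in\bbV^{(i+1)}$.
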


\begin{proof}[Proof of Lemma \ref{lemma:DL_deter}]
We first note that $\bbV\UP[i] \subset \bbV\UP[n-1]$ and $\bbV\UP[j] \subset \bbV\UP[n]$, so it suffices to verify $\mP_{\bbV\UP[n-1]}\mD \mLd\mP_{\bbV\UP[n]}=\mP_{\bbV\UP[n-1]}\hmD \hmLd\mP_{\bbV\UP[n]}$ and $\mP_{\bbV\UP[n]}\mLd \mD \mP_{\bbV\UP[n-1]} = \mP_{\bbV\UP[n]}\hmLd \hmD \mP_{\bbV\UP[n-1]}$.
We will focus on proving the first equality, as the second can be demonstrated analogously. 

Let $\mathbb{S} = (\bbV\UP[0])^{\perp} \cap \bbV\UP[n]$ and define
\begin{gather*}
    \mD' = \mP_{\bbV\UP[n-1]} \mD \mP_{\mathbb{S}}, \quad
    \mK' = \mP_{\bbV\UP[n-1]} \mD \mLd \mP_{\mathbb{S}}, \quad
    \mL' = \mP_{\mathbb{S}} \mL \mP_{\mathbb{S}}, \\
    \hmD' = \mP_{\bbV\UP[n-1]} \hmD \mP_{\mathbb{S}}, \quad
    \hmK' = \mP_{\bbV\UP[n-1]} \hmD \hmLd \mP_{\mathbb{S}}, \quad
    \hmL' = \mP_{\mathbb{S}} \hmL \mP_{\mathbb{S}}.
\end{gather*}
Given that $\mLd \mP_{\bbV\UP[0]} = \hmLd \mP_{\bbV\UP[0]} = 0$, it is clear that $\mK' = \mP_{\bbV\UP[n-1]} \mD \mLd \mP_{\bbV\UP[n]}$ and $\hmK' = \mP_{\bbV\UP[n-1]} \hmD \mLd \mP_{\bbV\UP[n]}$. Thus, our objective reduces to showing $\mK' = \hmK'$.

For clarity, let $\mathcal{O}\big|{\mathbb{A}}$ denote the restriction of an operator $\mathcal{O}$ to a subspace $\mathbb{A}$. Notably, we have $\mK|_{\mathbb{S}^{\perp}} = \mK'|_{\mathbb{S}^{\perp}} = 0$ due to the projection $\mP_{\mathbb{S}}$ in their definitions. 
Therefore, it remains only to prove $\mK'|_{\mathbb{S}} = \hmK'|_{\mathbb{S}}$.

The three operators $\mD'$, $\mK'$ and $\mL'$ satisfy
\begin{equation} \label{eq:DKL}
    \mD' = \mK' \mL',
\end{equation}
which can be verified as follows:
\begin{displaymath}
    \begin{aligned}
        \mP_{\bbV\UP[n-1]} \mD \mP_{\mathbb{S}} &= \mP_{\bbV\UP[n-1]} \mD \mP_{(\bbV\UP[0])^{\perp}} \mP_{\mathbb{S}} = \mP_{\bbV\UP[n-1]} \mD \mLd \mL \mP_{\mathbb{S}} \\
        &= \mP_{\bbV\UP[n-1]} \mD \mLd \left(\mP_{\bbV\UP[0]} + \mP_{\mathbb{S}} + \mP_{(\bbV\UP[n])^{\perp}} \right) \mL \mP_{\mathbb{S}} = \mP_{\bbV\UP[n-1]} \mD \mLd  \mP_{\mathbb{S}} \mL \mP_{\mathbb{S}}
    \end{aligned}
\end{displaymath}
Here, we have utilized Lemma \ref{lemma:DL_hess} and the conservation property $\mLd \mP_{\bbV\UP[0]} = 0$ in the second line.
By Proposition \ref{prop:invert}, the operator $\mL'$ is invertible on $\mathbb{S}$. Consequently, restricting the operators $\mD'$, $\mK'$, and $\mL'$ to $\mathbb{S}$ yields:
\begin{displaymath}
    \mK'|_{\mathbb{S}} = \mD'|_{\mathbb{S}} (\mL'|_{\mathbb{S}})^{-1},
\end{displaymath}
where $\mL'|_{\mathbb{S}}$ is interpreted as a mapping of $\mathbb{S}\longrightarrow \mathbb{S}$.
Using a similar argument, we can also establish $\hmK'|_{\mathbb{S}} = \hmD'|_{\mathbb{S}} (\hmL'|_{\mathbb{S}})^{-1}$. Given that $\mD' = \hmD'$ and $\mL' = \hmL'$, we immediately obtain the desired conclusion $\mK'|_{\mathbb{S}} = \hmK'|_{\mathbb{S}}$.
\end{proof}

\begin{corollary}
    \label{coro:hat_DL_hess}
    For $1\leqslant k\leqslant n$, it holds 
    $$\mP_{\bbV\UP[k-1]}\hmD \hmLd\mP_{(\bbV\UP[k])^{\perp}}=0,\qquad \mP_{(\bbV\UP[k])^{\perp}}\hmLd\hmD\mP_{\bbV\UP[k-1]}=0.$$ 
\end{corollary}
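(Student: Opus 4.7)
The plan is to reduce both identities in Corollary~\ref{coro:hat_DL_hess} to the already-proved Lemmas~\ref{lemma:DL_hess} and~\ref{lemma:DL_deter}, via a projection-decomposition trick that exploits the nesting $\bbV\UP[0]\subset\bbV\UP[k-1]\subset\bbV\UP[k]\subset\bbV\UP[n]$ together with the fact that the image and effective domain of $\hmLd$ both lie in $\bbV\UP[n]$ (so that $\hmLd$ annihilates $(\bbV\UP[n])^{\perp}$). The guiding observation is that Lemma~\ref{lemma:DL_deter} already equates $\mP_{\bbV\UP[i]}\hmD\hmLd\mP_{\bbV\UP[j]}$ with $\mP_{\bbV\UP[i]}\mD\mLd\mP_{\bbV\UP[j]}$ whenever $i$ and $j$ are in range; the task is to bootstrap this into a statement involving the orthogonal projection $\mP_{(\bbV\UP[k])^{\perp}}$.

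For the first identity, I would begin by using the annihilation property to rewrite $\hmLd\mP_{(\bbV\UP[k])^{\perp}} = \hmLd\mP_{(\bbV\UP[k])^{\perp}\cap\bbV\UP[n]}$. Because $\bbV\UP[k]\subset\bbV\UP[n]$, the middle projection splits as $\mP_{(\bbV\UP[k])^{\perp}\cap\bbV\UP[n]} = \mP_{\bbV\UP[n]} - \mP_{\bbV\UP[k]}$. Applying Lemma~\ref{lemma:DL_deter} with $i=k-1\leqslant n-1$ and $j\in\{n,k\}$ then converts $\hmD\hmLd$ into $\mD\mLd$ on each piece, giving
\[
\mP_{\bbV\UP[k-1]}\hmD\hmLd\mP_{(\bbV\UP[k])^{\perp}} \;=\; \mP_{\bbV\UP[k-1]}\mD\mLd\bigl(\mP_{\bbV\UP[n]}-\mP_{\bbV\UP[k]}\bigr).
\]
Reversing the decomposition on the right, $\mP_{\bbV\UP[n]}-\mP_{\bbV\UP[k]} = \mP_{(\bbV\UP[k])^{\perp}} - \mP_{(\bbV\UP[n])^{\perp}}$, and I would invoke Lemma~\ref{lemma:DL_hess} twice: directly to obtain $\mP_{\bbV\UP[k-1]}\mD\mLd\mP_{(\bbV\UP[k])^{\perp}}=0$, and at level $n$ (using $\bbV\UP[k-1]\subset\bbV\UP[n-1]$, so that $\mP_{\bbV\UP[k-1]}=\mP_{\bbV\UP[k-1]}\mP_{\bbV\UP[n-1]}$) to kill $\mP_{\bbV\UP[k-1]}\mD\mLd\mP_{(\bbV\UP[n])^{\perp}}$. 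Both terms vanish, completing the first identity.

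The second identity is handled symmetrically. Since $\hmLd\hmD\mP_{\bbV\UP[k-1]}$ lands in $\bbV\UP[n]$, applying $\mP_{(\bbV\UP[k])^{\perp}}$ to its output equals applying $\mP_{(\bbV\UP[k])^{\perp}\cap\bbV\UP[n]}=\mP_{\bbV\UP[n]}-\mP_{\bbV\UP[k]}$. Lemma~\ref{lemma:DL_deter} (with $j\in\{n,k\}$, $i=k-1$) again converts $\hmLd\hmD$ into $\mLd\mD$ on each piece, and the second half of Lemma~\ref{lemma:DL_hess}, invoked at indices $k$ and $n$, disposes of the remaining terms.

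The main obstacle is purely bookkeeping: verifying the projection identity $\mP_{(\bbV\UP[k])^{\perp}\cap\bbV\UP[n]}=\mP_{\bbV\UP[n]}-\mP_{\bbV\UP[k]}$ (which hinges on $\bbV\UP[k]\subset\bbV\UP[n]$) and making sure the index ranges required by Lemma~\ref{lemma:DL_deter} ($0\leqslant i\leqslant n-1$, $0\leqslant j\leqslant n$) are respected in every invocation. Both conditions hold throughout the range $1\leqslant k\leqslant n$, so no additional machinery is required and the corollary follows as an immediate consequence of the two preceding lemmas.
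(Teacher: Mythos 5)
Your proposal is correct and follows essentially the same route as the paper: both proofs exploit $\hmLd=\hmLd\mP_{\bbV\UP[n]}=\mP_{\bbV\UP[n]}\hmLd$ together with the nesting $\bbV\UP[k]\subset\bbV\UP[n]$, then use Lemma \ref{lemma:DL_deter} to replace $\hmD\hmLd$ (resp. $\hmLd\hmD$) by $\mD\mLd$ (resp. $\mLd\mD$), and finish with Lemma \ref{lemma:DL_hess}. The only difference is bookkeeping: the paper commutes $\mP_{\bbV\UP[n]}$ with $\mP_{(\bbV\UP[k])^{\perp}}$ and applies each lemma once, whereas you split $\mP_{\bbV\UP[n]}-\mP_{\bbV\UP[k]}$ into $\mP_{(\bbV\UP[k])^{\perp}}-\mP_{(\bbV\UP[n])^{\perp}}$ and apply each lemma twice.
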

\begin{proof}
    The support and range of $\hmL$ are both contained within $\bbV\UP[n]$, which implies $\hmLd=\hmLd \mP_{\bbV\UP[n]}=\mP_{\bbV\UP[n]}\hmLd$. With direct computations, we have      
    \begin{equation}
        \label{eq:hat_DL_comp}
        \begin{split}
            \mP_{\bbV\UP[k-1]}\hmD \hmLd\mP_{(\bbV\UP[k])^{\perp}}&=\mP_{\bbV\UP[k-1]}\hmD \hmLd\mP_{\bbV\UP[n]}\mP_{(\bbV\UP[k])^{\perp}} = \mP_{\bbV\UP[k-1]}\mD \mLd\mP_{\bbV\UP[n]}\mP_{(\bbV\UP[k])^{\perp}} \\ 
            &= \mP_{\bbV\UP[k-1]}\mD \mLd\mP_{(\bbV\UP[k])^{\perp}}\mP_{\bbV\UP[n]}=0.
        \end{split}
    \end{equation}
    The calculation uses the fact that $\mP_{(\bbV\UP[k])^{\perp}}$ and $\mP_{\bbV\UP[n]}$ are commutable, which can be validated as follows:
    \begin{equation}
        \label{eq:commut}
        \mP_{\bbV\UP[n]}\mP_{(\bbV\UP[k])^{\perp}}=\mP_{\bbV\UP[n]}(\mI-\mP_{\bbV\UP[k]})=\mP_{\bbV\UP[n]}-\mP_{\bbV\UP[k]}=(\mI-\mP_{\bbV\UP[k]})\mP_{\bbV\UP[n]}=\mP_{(\bbV\UP[k])^{\perp}}\mP_{\bbV\UP[n]}.
    \end{equation}
    An identical approach can be used to establish that $\mP_{(\bbV\UP[k])^{\perp}}\hmLd\hmD\mP_{\bbV\UP[k-1]}=0$.
\end{proof}

The above statements will be applied to the analysis of the linearized case $\mQs = 0$ (Theorem \ref{thm:linear}). To complete the proof of Theorem \ref{thm:full}, we now present a few additional results concerning quadratic operators in the following lemmas.
\begin{lemma}
    \label{lemma:hmQs}
    For $l\leqslant k$, it holds that
    \begin{equation}
        \label{eq:lemma_hmQs}
        \sum_{\substack{r,s,i,j \geqslant 1\\ r+s \leqslant l\\ i+j \leqslant k}} \mQ[(\mP_{\bbV\UP[i]} - \mP_{\bbV\UP[i-1]})(f_r - f_{r-1}), (\mP_{\bbV\UP[j]} - \mP_{\bbV\UP[j-1]})(f_s - f_{s-1})] = \mQs_l,
    \end{equation}
     where $\mQs_l$ is defined in \eqref{eq:def_mQs_l}.
\end{lemma}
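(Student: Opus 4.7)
The plan is to reduce the claim to an elementary telescoping identity, after first establishing that every Chapman--Enskog increment $f_r - f_{r-1}$ already lies in the right subspace. Specifically, from the Maxwellian iteration \eqref{eq:re_Maxwell_iter} and the recursive construction \eqref{eq:def_VF}, a short induction parallel to the proof of Theorem~\ref{thm:order} shows that $f_r - f_{r-1} \in \bbV\UP[r] \cap (\bbV\UP[0])^{\perp}$ for every $r \geqslant 1$. The inclusion in $\bbV\UP[r]$ is essentially the content of the remark following Theorem~\ref{thm:order}, namely $f_r \in \rspan(\bbW\UP[r])$, together with the nesting $\bbW\UP[r-1] \subset \bbW\UP[r]$; the orthogonality to $\bbV\UP[0] = \ker\mL$ follows because $f_r - \feq$ lies in the range of $\mLd$, which is $(\ker\mL)^{\perp}$.

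With this inclusion in hand, I would next observe that $(\mP_{\bbV\UP[i]} - \mP_{\bbV\UP[i-1]})(f_r - f_{r-1}) = 0$ whenever $i > r$, since $f_r - f_{r-1} \in \bbV\UP[r] \subset \bbV\UP[i-1]$ forces the two projections to agree on this argument. Consequently, for fixed $(r,s)$ with $r + s \leqslant l$, only indices $1 \leqslant i \leqslant r$ and $1 \leqslant j \leqslant s$ can contribute to the inner double sum. The hypothesis $l \leqslant k$ then yields $i + j \leqslant r + s \leqslant l \leqslant k$ automatically, so the side constraint $i + j \leqslant k$ is redundant precisely on the non-vanishing terms.

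Dropping this redundant constraint and invoking bilinearity of $\mQ$, the inner double sum factors into two independent one-dimensional sums, each of which telescopes:
\begin{equation*}
\sum_{i=1}^{r}(\mP_{\bbV\UP[i]} - \mP_{\bbV\UP[i-1]})(f_r - f_{r-1}) = \mP_{\bbV\UP[r]}(f_r - f_{r-1}) - \mP_{\bbV\UP[0]}(f_r - f_{r-1}) = f_r - f_{r-1},
\end{equation*}
where both equalities use the inclusion $f_r - f_{r-1} \in \bbV\UP[r] \cap (\bbV\UP[0])^{\perp}$ established above. The same calculation applies to the $j$-sum, so the inner double sum collapses to $\mQ[f_r - f_{r-1}, f_s - f_{s-1}]$; summing over $(r,s)$ with $r,s \geqslant 1$ and $r + s \leqslant l$ then reproduces the definition \eqref{eq:def_mQs_l} of $\mQs_l$.

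The only step requiring genuine care is the preliminary inclusion $f_r - f_{r-1} \in \bbV\UP[r] \cap (\bbV\UP[0])^{\perp}$. Verifying it by induction requires showing that $\mD f_{r-1} - \Kn^{-1}\mQs_r \in \rspan(\bbF\UP[r-1])$, which in turn uses the closure properties built into \eqref{eq:def_VF}: $\bbF\UP[r-1]$ already contains the $\bxi$-multiplied and $(\bu,\theta)$-differentiated versions of all basis elements in $\bbW\UP[0] \cup \cdots \cup \bbW\UP[r-1]$, together with quadratic terms $\mQ[\vp\UP[a], \vp\UP[b]]$ for $a + b \leqslant r$. This bookkeeping is the only real obstacle; once handled, the telescoping argument above closes the proof immediately.
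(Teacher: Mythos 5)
Your proposal is correct and follows essentially the same route as the paper's proof: terms with $i>r$ or $j>s$ vanish because $f_r-f_{r-1}\in\bbV\UP[r]$, the constraint $i+j\leqslant k$ is redundant when $l\leqslant k$, and the remaining double sum telescopes to $\mQ[f_r-f_{r-1},f_s-f_{s-1}]$. The only difference is presentational: you make explicit the inclusion $f_r-f_{r-1}\in\bbV\UP[r]\cap(\bbV\UP[0])^{\perp}$ (needed so the telescoped sum equals $f_r-f_{r-1}$ rather than $\mP_{\bbV\UP[r]}(f_r-f_{r-1})-\mP_{\bbV\UP[0]}(f_r-f_{r-1})$ with a nonzero second term), which the paper uses implicitly via the Maxwellian iteration \eqref{eq:Maxwell_iter} and the remark after Theorem \ref{thm:order}.
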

\begin{proof}[Proof of Lemma \ref{lemma:hmQs}]
    Since $f_s\in \bbV\UP[s]$, when $i>r$ or $j>s$, any term satisfying $i > r$ or $j > s$
    $$
    \mQ[(\mP_{\bbV\UP[i]} - \mP_{\bbV\UP[i-1]})(f_r - f_{r-1}), (\mP_{\bbV\UP[j]} - \mP_{\bbV\UP[j-1]})(f_s - f_{s-1})]
    $$
    vanishes. This allows us to restrict the summation over $i \leqslant r$ and $j \leqslant s$. When $l\leqslant k$, Additionally, for $l \leqslant k$, the condition $i + j \leqslant k$ becomes redundant. As a result, the left-hand side of \eqref{eq:lemma_hmQs} simplifies to
    \begin{equation}
        \label{eq:simp_hmQs}
        \begin{split}
            {\rm LHS}&=\sum_{\substack{r,s \geqslant 1\\ r+s \leqslant l}}\sum_{i=1}^r\sum_{j=1}^s \mQ[(\mP_{\bbV\UP[i]} - \mP_{\bbV\UP[i-1]})(f_r - f_{r-1}), (\mP_{\bbV\UP[j]} - \mP_{\bbV\UP[j-1]})(f_s - f_{s-1})] \\
            &=\sum_{\substack{r,s \geqslant 1\\ r+s \leqslant l}}\mQ[\mP_{\bbV\UP[r]}(f_r - f_{r-1}), \mP_{\bbV\UP[s]}(f_s - f_{s-1})]
            =\sum_{\substack{r,s \geqslant 1\\ r+s \leqslant l}}\mQ[f_r - f_{r-1}, f_s - f_{s-1}] \\
            &=\mQs_l.
        \end{split}
    \end{equation}
\end{proof}

\begin{lemma}
    \label{lemma:Pn_mQs}
    For $l\leqslant n$, it holds 
    \begin{equation}
        \label{eq:Pn_mQs}
        \mP_{\bbV\UP[n]}\mLd \mQs_l=\mP_{\bbV\UP[n]}\hmLd\mP_{\bbV\UP[n]}\mQs_l.
    \end{equation}
\end{lemma}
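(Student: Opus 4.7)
The plan is to prove the identity by showing that when $l \leqslant n$, the element $\mLd \mQs_l$ already lies inside $\bbV\UP[n]$. Once this containment is in hand, the outer projection $\mP_{\bbV\UP[n]}$ on the left-hand side acts trivially, and a short manipulation using the invertibility of the compressed operator $\hmL := \mP_{\bbV\UP[n]} \mL \mP_{\bbV\UP[n]}$ on $\mathbb{S} := (\bbV\UP[0])^{\perp} \cap \bbV\UP[n]$ will identify $\mLd \mQs_l$ with $\mP_{\bbV\UP[n]}\hmLd\mP_{\bbV\UP[n]}\mQs_l$. Everything past this structural containment is routine finite-dimensional linear algebra.

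The crucial step is $\mQs_l \in \rspan(\bbF\UP[l-1])$, which is essentially a repetition of the argument used in the proof of Theorem \ref{thm:order}. By the remark following that theorem, each Maxwellian iterate satisfies $f_r \in \rspan(\bbW\UP[r])$, so $f_r - f_{r-1} \in \rspan(\bbW\UP[r])$. Expanding both arguments of $\mQ[f_r - f_{r-1}, f_s - f_{s-1}]$ in the bases of $\bbW\UP[r]$ and $\bbW\UP[s]$ and using bilinearity, the result is a linear combination of collision products $\mQ^*[\vp\UP[r], \vp\UP[s]]$ with $\vp\UP[r] \in \bbW\UP[r]$, $\vp\UP[s] \in \bbW\UP[s]$, $r + s \leqslant l$. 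By the recursive definition \eqref{eq:def_VF}, every such product belongs to $\bbF\UP[l-1]$ (the index condition $r+s \leqslant k+1$ with $k = l-1$ matches precisely). Applying $\mLd$ then gives $\mLd \mQs_l \in \rspan(\mLd\bbF\UP[l-1]) \subset \rspan(\bbW\UP[l]) \subset \bbV\UP[l] \subset \bbV\UP[n]$, as needed.

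Writing $h := \mLd \mQs_l$, the inclusion above places $h$ in $\mathbb{S}$, since $h$ is also in $(\bbV\UP[0])^{\perp}$ by construction of $\mLd$. I will then apply $\mP_{\bbV\UP[n]}$ to $\mL h = \mP_{(\bbV\UP[0])^{\perp}} \mQs_l$, using $h = \mP_{\bbV\UP[n]} h$ and $\mP_{\bbV\UP[n]} \mP_{(\bbV\UP[0])^{\perp}} = \mP_{\bbV\UP[n]} - \mP_{\bbV\UP[0]} = \mP_{\mathbb{S}}$ (the latter because $\bbV\UP[0] \subset \bbV\UP[n]$), to obtain $\hmL h = \mP_{\mathbb{S}} \mQs_l$. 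Proposition \ref{prop:invert} permits inverting $\hmL$ on $\mathbb{S}$, so $h = \hmLd \mP_{\mathbb{S}} \mQs_l$. The right-hand side of the claim then unwinds via $\mP_{\bbV\UP[n]} \mQs_l = \mP_{\mathbb{S}} \mQs_l + \mP_{\bbV\UP[0]} \mQs_l$ together with $\hmLd \mP_{\bbV\UP[0]} = 0$ and $\mP_{\bbV\UP[n]} \hmLd = \hmLd$ into the same $h$, completing the identity. The main obstacle is exactly the first step; once the structural containment $\mQs_l \in \rspan(\bbF\UP[l-1])$ is established, the rest is bookkeeping, and this containment reflects precisely why the definition of $\bbF\UP[k]$ includes the quadratic products $\mQ^*[\vp\UP[r], \vp\UP[s]]$ up to $r+s \leqslant k+1$.
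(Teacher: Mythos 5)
Your proposal is correct and follows essentially the same route as the paper: both hinge on the containment $\mLd\mQs_l\in(\bbV\UP[0])^{\perp}\cap\bbV\UP[l]\subset\mathbb{S}$ followed by a short projection/inversion computation using $\mL\mLd=\mP_{(\bbV\UP[0])^{\perp}}$ and Proposition \ref{prop:invert}. You merely spell out the containment via the $\bbF\UP[l-1]$, $\bbW\UP[l]$ structure (which the paper only asserts) and solve $\hmL h=\mP_{\mathbb{S}}\mQs_l$ directly rather than chaining the projections from the right-hand side, which is an equivalent bookkeeping.
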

\begin{proof}[Proof of Lemma \ref{lemma:Pn_mQs}]
    Denote $\mR_l=\mLd \mQs_l$. Then $\mR_l\in (\bbV\UP[0])^{\perp}\cap\bbV\UP[l]$, and the left-hand side of \eqref{eq:Pn_mQs} can be transformed as follows:
    \begin{displaymath}
        \begin{aligned}
            \mP_{\bbV\UP[n]} \hmLd \mP_{\bbV\UP[n]}\mQs_l&= \mP_{\bbV\UP[n]} \hmLd \mP_{\bbV\UP[n]}\mP_{(\bbV\UP[0])^{\perp}}\mQs_l = \mP_{\bbV\UP[n]}\hmLd\mP_{\bbV\UP[n]}\mL \mLd \mQs_l \\
            &= \mP_{\bbV\UP[n]}\hmLd\mP_{\bbV\UP[n]}\mL \mR_l = \mP_{\bbV\UP[n]}\hmLd\mP_{\bbV\UP[n]}\mL \mP_{\bbV\UP[n]}\mR_l =\mP_{\bbV\UP[n]}\hmLd\hmL \mR_l \\
            &= \mP_{\bbV\UP[n]}\mR_l =\mP_{\bbV\UP[n]} \mLd \mQs_l.
        \end{aligned}
    \end{displaymath}
\end{proof}

In the following, proofs of Theorems \ref{thm:full} and \ref{thm:linear} will be presented.

\subsubsection{Proof for the reduced model with linearized collision operators}
\label{sec:lin_order}
We first consider the analysis of the reduced models for the Boltzmann equation with a linearized collision operator, as described in Theorem \ref{thm:linear}. In this context, the quadratic terms $\mbQ\sUP[k]$ are set to zero, and the Maxwellian iteration schemes \eqref{eq:origin_Maxwell}, \eqref{eq:moment_Maxwell} and \eqref{eq:regular_Maxwell} simplify to
\begin{align}
    f_l=f\UP[0]+\Kn \mLd \mD f_{l-1}, \label{eq:linear_origin_Maxwell} \\ 
    \hf_l=f\UP[0]+\Kn \hmLd \hmD \hf_{l-1}, \label{eq:linear_moment_Maxwell} \\
    \wf_l=f\UP[0]+\Kn \hmLd \hmD \wf_{l-1}. \label{eq:linear_moment_Maxwell}
\end{align}
    
We now proceed with the detailed proof of Theorem \ref{thm:linear}.
\begin{proof}[Proof of Theorem \ref{thm:linear}]
    We define the following auxiliary quantities:
    \begin{equation}
        \label{eq:def_mF_new}
        \mF\UP[k]_l=\mP_{\bbV\UP[k]} \mD f_l,\qquad \hmF\UP[k]_l=\mP_{\bbV\UP[k]} \hmD \hf_l.\qquad \wmF\UP[k]_l=\mP_{\bbV\UP[k]} \wmD \wf_l.
    \end{equation}
    Our objective is to demonstrate that $\mF\UP[0]_{2n}=\hmF\UP[0]_{2n}$ and $\mF\UP[0]_{2n-1}=\wmF\UP[0]_{2n-1}+O(\Kn^{2n})$. We first establish $\mF\UP[0]_{2n} = \hmF\UP[0]_{2n}$, and the second statement can be verified using a similar approach.  
    
    \paragraph{Step 1}
    This step derives the expressions of $\mF_{n}\UP[n]$ and $\hmF_n\UP[n]$. For any $l\leqslant n$, since $f_l\in\bbV\UP[l]$, the Maxwellian iteration can be reformulated as follows:
    \begin{equation}
        \label{eq:re_linear_Max}
        \begin{split}
            f_l&=f\UP[0]+\Kn \mP_{\bbV\UP[l]}\mLd \mD\mP_{\bbV\UP[l-1]} f_{l-1}, \\
            \hf_l&=f\UP[0]+\Kn \mP_{\bbV\UP[l]}\hmLd \hmD\mP_{\bbV\UP[l-1]} \hf_{l-1}
            =f\UP[0]+\Kn \mP_{\bbV\UP[l]}\mLd \mD\mP_{\bbV\UP[l-1]} \hf_{l-1}. \quad\text{(Lemma \ref{lemma:DL_deter})}
        \end{split}
    \end{equation}
    With the initial condition $f_0 = \hf_0 = f\UP[0]$, an inductive argument confirms that $f_n = \hf_n$.
    Consequently, it follows that:
    \begin{equation}
        \hmF_n\UP[n]=\mP_{\bbV\UP[n]}\hmD \hf_n=\mP_{\bbV\UP[n]}\mD\mP_{\bbV\UP[n]} f_n=\mP_{\bbV\UP[n]}\mD f_n=\mF_n\UP[n].
    \end{equation}
    
    \paragraph{Step 2}
    With \eqref{eq:linear_origin_Maxwell}, we have
    \begin{equation}
        \label{eq:rec_mF_1}
        \begin{split}
            \mF\UP[k]_l&=\mP_{\bbV\UP[k]}\mD f\UP[0]+\Kn\mP_{\bbV\UP[k]} \mD \mLd \mD f_{l-1} \\
            &= \mP_{\bbV\UP[k]}\mD f\UP[0]+\Kn\mP_{\bbV\UP[k]} \mD \mLd\left(\mP_{\bbV\UP[k+1]} \mD f_{l-1}+\mP_{\left(\bbV\UP[k+1]\right)^{\perp}} \mD f_{l-1}\right).
        \end{split}     
    \end{equation}
    By Lemma \ref{lemma:DL_hess}, the term
    $\mP_{\bbV\UP[k]} \mD \mLd \mP_{\left(\bbV\UP[k+1]\right)^{\perp}} \mD f_{l-1}$ vanishes. This simplifies \eqref{eq:rec_mF_1} to
    \begin{equation}
        \label{eq:rec_mF_2}
        \mF\UP[k]_l=\mP_{\bbV\UP[k]}\mD f\UP[0]+\Kn\mP_{\bbV\UP[k]} \mD \mLd \mP_{\bbV\UP[k+1]}\mF\UP[k+1]_{l-1}.
    \end{equation}
    Similarly, we have 
    \begin{equation}
        \label{eq:rec_hmF_2}
        \begin{split}
            \hmF\UP[k]_l&=\mP_{\bbV\UP[k]}\hmD f\UP[0]+\Kn\mP_{\bbV\UP[k]} \hmD \hmLd \mP_{\bbV\UP[k+1]}\hmF\UP[k+1]_{l-1} \\
            &=\mP_{\bbV\UP[k]}\mD f\UP[0]+\Kn\mP_{\bbV\UP[k]} \mD \mLd \mP_{\bbV\UP[k+1]}\hmF\UP[k+1]_{l-1}. \quad\text{(Using Lemma \ref{lemma:DL_deter}.)}
        \end{split}
    \end{equation}
    To establish $\hmF\UP[0]{2n} = \mF\UP[0]{2n}$, we initiate the following inductive process:
    \begin{itemize}
        \item Setting $l = n + 1$ and $k = n - 1$, and using $\hmF_n\UP[n] = \mF_n\UP[n]$, a direct comparison between \eqref{eq:rec_mF_2} and \eqref{eq:rec_hmF_2} confirms $\hmF_{n+1}\UP[n-1] = \mF_{n+1}\UP[n-1]$.
        \item Repeating this process with $k = n - 2$ and $l = n + 2$ leads to $\hmF_{n+2}\UP[n-2] = \mF_{n+2}\UP[n-2]$.
        \item Continuing this step-by-step down to $k = 0$ and $l = 2n$, we ultimately establish $\hmF\UP[0]{2n} = \mF\UP[0]{2n}$.
    \end{itemize}

    The analysis for the regularized system \eqref{eq:oper_regularized} follows a similar strategy. In Step 1, when obtaining $f_n$ with \eqref{eq:re_linear_Max}, we note that the operator the domain of operator $\wmD$ belongs to $\bbV\UP[n-1]$. Property \ref{prop:wmD} ensures $\wf_n = \hf_n = f_n$. It follows that:
    \begin{equation}
        \label{eq:wmF_n}
        \begin{split}
        \wmF_{n}\UP[n-1] &= \mP_{\bbV\UP[n-1]} \wmD \wf_n 
        = \mP_{\bbV\UP[n-1]} \hmD \hf_n + O(\Kn^{2n}) \\
        &= \mP_{\bbV\UP[n-1]} \hmD \mP_{\bbV\UP[n]} f_n + O(\Kn^{2n}) = \mF_n\UP[n-1]+O(\Kn^{2n}).
        \end{split}
    \end{equation}
    Following the same recurrence approach in Step 2, we establish that 
    \begin{equation}
        \label{eq:rec_wmF_2}
        \begin{split}
            \wmF_l\UP[k] &= \mP_{\bbV\UP[k]}\wmD f\UP[0]+\Kn\mP_{\bbV\UP[k]} \wmD \wmLd \mP_{\bbV\UP[k+1]}\wmF\UP[k+1]_{l-1} \\
            &= \mP_{\bbV\UP[k]}\hmD f\UP[0]+\Kn\mP_{\bbV\UP[k]} \hmD \hmLd \mP_{\bbV\UP[k+1]}\wmF\UP[k+1]_{l-1} + O(\Kn^{2n}) \\
            &= \mP_{\bbV\UP[k]}\hmD f\UP[0]+\Kn\mP_{\bbV\UP[k]} \hmD \hmLd \mP_{\bbV\UP[k+1]}\wmF\UP[k+1]_{l-1} + O(\Kn^{2n}),\quad k\leqslant n-1.
        \end{split}
    \end{equation}
    By successively setting $k = n - 1, n - 2, \ldots, 0$ and $l = 2n - 1 - k$, we ultimately prove $\wmF_{2n-1}\UP[0]=\mF_{2n-1}\UP[0]+O(\Kn^{2n})$.
\end{proof}


\subsubsection{Proof for the reduced model with quadratic collision operators}
\label{sec:full_order}

\begin{proof}[Proof of Theorem \ref{thm:full}]
    We aim to verify $\mP_{\bbV\UP[0]} \hmD \hf_{n+1}=\mP_{\bbV\UP[0]}\mD f_{n+1}$, and $\mP_{\bbV\UP[0]} \wmD \wf_{n+1}=\mP_{\bbV\UP[0]}\mD f_{n+1} + O(\Kn^{n+2})$. 

    \paragraph{Step 1} We first establish that $\wf_l=\hf_l=f_l$ for $l=0, 1, \cdots, n$ by induction. The base case $l=0$ naturally holds. Assume it holds for $0, 1, \cdots, l-1$, and consider the case of $l$. Similar to the proof of Theorem \ref{thm:linear}, we can derive 
    \begin{equation}
        \label{eq:eq_LDf}
        \mP_{\bbV\UP[l]} \mLd \mD f_{l-1}=\mP_{\bbV\UP[l]} \hmLd \hmD \hf_{l-1}.
    \end{equation}
    Moreover, by combining the definitions in \eqref{eq:def_hmQs_l}, \eqref{eq:def_wmQs_l}, and leveraging Lemmas \ref{lemma:hmQs} and \ref{lemma:Pn_mQs}, we have
    \begin{equation}
        \label{eq:eq_LQ}
        \begin{split}
            \mP_{\bbV\UP[l]}\hmLd \hmQs_l 
            &= \mP_{\bbV\UP[l]}\hmLd\mP_{\bbV\UP[n]}\mQs_l=\mP_{\bbV\UP[l]}\mLd\mQs_l. \\
            \mP_{\bbV\UP[l]}\wmLd \wmQs_l &= \mP_{\bbV\UP[l]}\hmLd\mP_{\bbV\UP[n]}\mQs_l=\mP_{\bbV\UP[l]}\mLd\mQs_l.
        \end{split}
    \end{equation}
    Substituting these into the Maxwellian iterations \eqref{eq:origin_Maxwell}, \eqref{eq:moment_Maxwell}, and \eqref{eq:regular_Maxwell} leads to
    \begin{equation}
        \label{eq:eq_fl}
        \begin{split} 
        \hf_l &= f\UP[0] + \mP_{\bbV\UP[l]}\hmLd \left( \Kn \hmD \hf_{l-1} - \hmQs_l \right) = f\UP[0] + \mP_{\bbV\UP[l]}\mLd \left( \Kn \mD f_{l-1} - \mQs_l \right)=f_l. \\
        \wf_l &= f\UP[0] + \mP_{\bbV\UP[l]}\wmLd \left( \Kn \wmD \wf_{l-1} - \wmQs_l \right) 
        = f\UP[0] + \Kn \mP_{\bbV\UP[l]}\hmLd\mP_{\bbV\UP[n]}\wmD f_{l-1} -  \mP_{\bbV\UP[l]}\wmLd\mQs_l \\ 
        &= f\UP[0] + \Kn \mP_{\bbV\UP[l]}\hmLd\hmD f_{l-1} -  \mP_{\bbV\UP[l]}\wmLd\mQs_l = f_l,
        \end{split}
    \end{equation}
    which completes the induction.

    \paragraph{Step 2}
    Following the same approach as in the derivations of \eqref{eq:rec_mF_2} and \eqref{eq:rec_hmF_2}, we obtain
    \begin{equation}
        \label{eq:exp_fnp1}
        \begin{split}
            \mP_{\bbV\UP[0]}\mD f_{n+1}&=\mP_{\bbV\UP[0]}\mD f\UP[0] +\mP_{\bbV\UP[0]}\mD\mLd\mP_{\bbV\UP[1]}\left(\Kn\mD f_n-\mQs_{n+1}\right), \\ 
            \mP_{\bbV\UP[0]}\hmD \hf_{n+1}&=\mP_{\bbV\UP[0]}\hmD f\UP[0] +\mP_{\bbV\UP[0]}\hmD\hmLd\mP_{\bbV\UP[1]}\left(\Kn\hmD \hf_n-\hmQs_{n+1}\right), \\
            \mP_{\bbV\UP[0]}\wmD \wf_{n+1}&=\mP_{\bbV\UP[0]}\wmD f\UP[0] +\mP_{\bbV\UP[0]}\wmD\wmLd\mP_{\bbV\UP[1]}\left(\Kn\wmD \wf_n-\wmQs_{n+1}\right), 
        \end{split}
    \end{equation}
    and
    \begin{equation}
        \label{eq:eq_DLfn}
        \mP_{\bbV\UP[0]}\mD\mLd\mP_{\bbV\UP[1]}\mD f_n=\mP_{\bbV\UP[0]}\hmD\hmLd\mP_{\bbV\UP[1]}\hmD \hf_n,\quad \mP_{\bbV\UP[0]}\mD f\UP[0]=\mP_{\bbV\UP[0]}\hmD f\UP[0].
    \end{equation}
    Besides, Lemma \ref{lemma:hmQs} indicates $\mP_{\bbV\UP[1]}\hmQs_{n+1}=\mP_{\bbV\UP[1]}\mQs_{n+1}$, ensuring $\mP_{\bbV\UP[0]} \mD f_{n+1}=\mP_{\bbV\UP[0]}\hmD \hf_{n+1}$. 

    For the regularized reduced system, it holds 
    \begin{equation}
        \label{eq:eq_wDLfn}
        \mP_{\bbV\UP[0]}\wmD\wmLd\mP_{\bbV\UP[1]}\wmD \wf_n=\mP_{\bbV\UP[0]}\hmD\hmLd\mP_{\bbV\UP[1]}\hmD \hf_n + O(\Kn^{2n}),\quad \mP_{\bbV\UP[0]}\wmD f\UP[0]=\mP_{\bbV\UP[0]}\hmD f\UP[0] + O(\Kn^{2n}).
    \end{equation}
    This result leads to $\mP_{\bbV\UP[0]} \wmD \wf_{n+1}=\mP_{\bbV\UP[0]}\hmD \hf_{n+1} + O(\Kn^{2n}) = \mP_{\bbV\UP[0]}\hmD \hf_{n+1}+O(\Kn^{n+2})$ (using $n\geqslant 2$), thereby completing the proof.  
\end{proof}
 


\begin{remark}
    Comparing Theorem \ref{thm:linear} and Theorem \ref{thm:full}, we observe that the linearized part achieves the accuracy order of $O(\Kn^{2n})$ or $O(\Kn^{2n-1})$, whereas the quadratic part only yields $O(\Kn^{n+1})$. When the impact of quadratic terms is minimal, the model can potentially provide a better approximation.
\end{remark}

\section{Conclusions}
\label{sec:conclusion}
This work presents a general framework for constructing high-order approximation models for the full Boltzmann equation. The key approach extends the methodology of \cite{LinR13} by decomposing the function space into subspaces of different orders with respect to $\Kn$, while introducing a novel and straightforward implementation of this decomposition. As demonstrated in Theorem \ref{thm:full}, our framework shows that a reduced system retaining terms up to $O(\Kn^n)$ can achieve $(n+1)$th-order accuracy. In particular, when the collision operator is linearized, Theorem \ref{thm:linear} reveals that even higher order of accuracy $O(\Kn^{2n})$ can be attained. These theoretical results establish a systematic procedure for deriving reduced models with arbitrary orders of accuracy.

To demonstrate practical applications, we have derived the 13-moment system with Burnett order accuracy and provided the general structure for super-Burnett order systems. Compared to conventional Burnett and super-Burnett equations, our new systems maintain more concise forms containing only first- and second-order derivatives. However, similar to Grad's moment equations, these new models may encounter challenges with hyperbolicity loss. Addressing this limitation, along with the development of appropriate boundary conditions, will be the focus of our future research.

\addcontentsline{toc}{section}{References}
\bibliographystyle{siamplain}
\bibliography{article}
\end{document}